\newcommand{\inp}{	\,\ensuremath{\raise-.5ex\hbox{$\invbackneg$}}} 
\newcommand{\qsp}[2]{\,\ensuremath{\raise.5ex\hbox{$#1$}\big\slash\raise-.5ex\hbox{$#2$}}} 
\newcommand{\Tr}[1]{\mathrm{Tr}\left[#1\right]}
\newcommand{\ev}[1]{\emph{#1}}
\newcommand{\dl}{\mathsf{d}_\ell}
\newcommand{\R}{\mathbb{R}}
\newcommand{\Ad}{\mathsf{Ad}}
\newcommand{\Cad}{\mathsf{Ad}^*}
\newcommand{\cad}{\mathsf{ad}^*}
\newcommand{\ad}{\mathsf{ad}}
\newcommand{\orb}{\mathcal{O}}
\newtheorem{definizione}{Definition}[section]
\newtheorem{theorem}[definizione]{Theorem}
\newtheorem{proposition}[definizione]{Proposition}
\newtheorem{lemma}[definizione]{Lemma}
\theoremstyle{definition}
\newenvironment{definition}{\begin{definizione}}{\end{definizione}}
\newtheoremstyle{Rem}
{}
{}
{}
{}
{\itshape}
{:}
{\newline}
{}
\theoremstyle{Rem}
\newtheorem*{Remark}{\textsf{Remark}}
\newcommand{\T}[2]{\mathsf{T}_{#2}#1}
\newenvironment{remark}{\begin{Remark}}{ \end{Remark}}
\begin{document}

\title[Geometric Fisher Information Tensor]{On the geometry of mixed states and the Fisher information tensor}

\date{\today}

\author{I. Contreras}
\affiliation{Department of Mathematics, University of Illinois at Urbana-Champaign, 1409 W. Green Street, IL 61801 Urbana, USA}
\email{icontrer@illinois.edu}


\author{E. Ercolessi}
\affiliation{Dipartimento di Fisica e Astronomia, Universit\`a di Bologna and INFN, V. Irnerio 46, 40127 Bologna, Italy}
\email{ercolessi@bo.infn.it}

\author{M. Schiavina}
\affiliation{Institut f\"ur Mathematik, Winterthurerstrasse 190, 8057 Z\"urich, Switzerland}
\email{michele.schiavina@math.uzh.ch}

\thanks{This research was (partly) supported by the NCCR SwissMAP, funded by the Swiss National Science Foundation. I.C. acknowledges support from the SNF Grant $P300P2\_154552$, and from the Department of Mathematics of the University of California at Berkeley, where most of this research was done. E.E. thanks G. Marmo for interesting discussions and acknowledges financial support from the INFN grant QUANTUM. M.S. acknowledges partial support from the SNF grant $200020\_149150/1$. I.C. and M.S. are thankful to the organisers of Poisson '14 for the support they were granted, as some of the results presented here have been worked out during the conference.}

\begin{abstract}
In this paper we will review the co-adjoint orbit formulation of finite dimensional quantum mechanics, and in this framework we will interpret the notion of  quantum Fisher information index (and metric). Following previous work of part of the authors, who introduced the definition of Fisher information tensor, we will show how its antisymmetric part is the pullback of the natural Kostant Kirillov Souriau symplectic form along some natural diffeomorphism. In order to do this we will need to understand the symmetric logarithmic derivative (SLD) as a proper 1-form, settling the issues about its very definition and explicit computation. Moreover, the fibration of co-adjoint orbits, seen as spaces of mixed states, is also discussed.
\end{abstract}


\maketitle

\section*{Introduction}
The idea of understanding finite dimensional quantum mechanics in a geometric framework \cite{morandi} is a fairly established approach to quantum problems that has lead to many interesting results and new perspectives both on the foundations of quantum mechanics as well as on applications. Quantum non-local phenomena like the Ahronov Bohm effect \cite{AB} or the existence of topological phases of matter, have been  predicted first and then observed. The non-local features of the phenomena under study are in fact well described in a geometric framework, where the topology of the configuration space (holes, defects, boundaries, etc.) plays a role in determining the quantum behaviour of the physical system. More recently, a series of studies \cite{Beng, Christandl} have suggested that we look at the entanglement of quantum states, probably the most famous non-local feature of QM, from a geometric perspective. The idea is to consider entanglement classes under the action of $SL_n(\mathbb{C})$ and intersect them in some suitable sense with the Weyl chamber in the Lie algebra $\mathfrak{sl}_n(\mathbb{C})$. This point of view lead to the definition of what is called the \emph{entanglement polytope}, and served in understanding how the foundational Pauli exclusion principle can be extended. The approach therein is strikingly similar to the one presented here, and this vicinity calls for a deeper understanding.

In the context of quantum metrology, a great deal of research has been dedicated to the problem of computing distances between quantum states, optimizing quantum measurements, and to the relationship that there is between these problems and quantum information. In fact, there exist several metrics in the \emph{space of states} \cite{Petz,Zyk}, among which we find the quantum Fisher information metric, that turns out to play an important role in information theory and quantum estimation theory \cite{metro}. It is a well established fact at this point that the quantum Fisher metric for pure states coincides (up to a scalar) with the Fubini-Study metric on the associated projective spaces \cite{Brody, Marmo} (and references therein). 

The computation of such a metric and the precise understanding of it in a wider picture is essential to unlock further developments, both on the experimental and on the theoretical side. As a matter of fact, the Fisher information metric relies on the computation of the symmetric logarithmic derivative (SLD), as suggested by \cite{Braun,Burb}, and until recently its explicit computation was possible only for pure states \cite{Barn} and two dimensional mixed states \cite{luati1,luati2}.

Various contributions, first of \cite{Marmo} and followed by \cite{ES1, ES2}, have shown that a geometric approach can be fundamental in understanding the objects that come into play, helping to address their computation and interpretation. Specifically, the key idea (see also \cite{ercolessi}) is to look at the space of states (pure or mixed) as co-adjoint orbits of the group of unitary matrices. Such a simple point of view enables us to use powerful tools from Lie theory and differential geometry. This has lead to the reduction of the problem of computing the SLD for an arbitrary $n$-dimensional mixed states to a system of $n^2$ linear equations, avoiding the decomposition of the matrix of derivatives $\partial \rho$ in a basis of eigenstates, allowing in this way the explicit computation of the Fisher metric for a mixed three level system in local coordinates \cite{ES2}, and potentially for any mixed state. Moreover, in \cite{ES1} the Fisher metric has been shown to coincide with the round metric on $S^3$ when the variation of the parameters is taken into account, i.e. transversally to a given co-adjoint orbit.

With the present paper we wish to develop this framework in greater detail, and make it accessible to a broader audience, including mathematicians and physicists interested in quantum metrology, as well as in differential- geometric and Lie theoretical aspects of quantum systems. Although some of the key mathematical techniques used in the sequel are widely known to experts in the field, they are not very common in the physics literature. We will present the basics of the theory of co-adjoint orbits and of compact Lie theory necessary to our purposes, to usher the reader less familiar with the formalism, into a deeper understanding of the mathematical framework. In section \ref{QM} we will make contact between the general mathematical theory and finite dimensional quantum mechanics. 

The added value comes from the translation of the physical data into a precise geometric language. This will reward us with new results and many insights on new interesting areas to be explored. The core of this paper will be devoted to showing how the symmetric logarithmic derivative can be defined rigorously, by constructing a natural vector bundle morphism that promotes solutions of the equation
\[
A=\frac12\{X,D\}
\] 
for $D$ a suitable diagonal matrix and $A$ an Hermitian off-diagonal matrix, to Lie algebra valued 1-forms. To emphasise this new point of view we will rename the SLD as \emph{symmetric logarithmic differential}. 

The problem with the above equation, and consequently with definition of SLD in the physics literature, is that it is not clear how to deal with solutions to the homogeneous equation $\{X,D\}=0$, and the uniqueness of the SLD. Moreover, the particular solution itself, when the problem is approached in the Hilbert space formalism, becomes utterly cumbersome, and one has to guess a correct \emph{ansatz} for the solution \cite{luati1}, even when the variation of the matrix $\rho$ is allowed only along a curve $\rho(\theta)$. 

This new perspective will allow us to interpret the re-defined \emph{symmetric logarithmic differential} $\dl\rho$ as  the \emph{unique} $U(n)$-equivariant Lie algebra valued 1-form that is obtained from the tautological 1-form $d\rho$ by pullback along an appropriate vector bundle map. This, in turn, will allow us to prove that the antisymmetric part of the so called  Fisher tensor, as defined in \cite{ES1,ES2}, is the pullback of the natural Kirillov-Kostant-Souriau (KKS) symplectic form on co-adjoint orbits, and that furthermore it is symplectic (Theorem \ref{propantisymm}, page \pageref{propantisymm}). The symmetric part of the Fisher tensor yields the quantum Fisher information metric as expected, and the question of whether this procedure preserves the K\"ahler structures on co-adjoint orbits arises naturally.  

As we mentioned above, the geometric features studied in this paper open up several gates towards a better understanding of the relationship between finite dimensional quantum systems and the geometry of co-adjoint orbits. One of them is the comparison of the symplectic structure for different co-adjoint orbits. In particular, one can show that the fibration of a flag manifold over smaller co-adjoint orbits is a symplectic fibration (in the sense of \cite{Guill}), with the generic fibre being a flag manifold of lower dimension (or products thereof). This implies the existence of a symplectic connection compatible with the KKS symplectic form for different orbits. We will illustrate how to interpret fibrations of co-adjoint orbits, in terms of a nesting of the spaces of mixed states. This will clarify similarities and differences between the pure and the mixed state cases, and it will help addressing the problem of finding a unified framework to treat them consistently.  

A natural question that arises here is whether there exists a similar symplectic connection compatible with the symplectic structure that appears from the Fisher tensor. In a subsequent work we intend to elaborate on this issue, providing a way to understand the symmetric logarithmic differential in terms of the symplectic fibration. Moreover, this hierarchy of co-adjoint orbits is very likely to be related to the hierarchy of entanglement orbits in \cite{Christandl}, and more research should be done to understand this relation.

Furthermore, it is interesting to understand the relationship of these constructions with K\"ahler geometry. It is known \cite{GuSt,Kost}, that there are ways to equip the co-adjoint orbits with a complex structure compatible with the symplectic structure, and choose polarisations that are essential in the context of geometric quantisation. The \emph{twist} introduced by the logarithmic differential, and subsequently by the Fisher tensor, might be reflected in the choice of polarisation and K\"ahler structure, and further research can surely be considered in this direction.

In Section \ref{Mixed} we will cast finite dimensional quantum mechanics (QM), in the density matrix representation, as the theory of co-adjoint orbits for the unitary group $U(n)$. We will review the concepts of pure and mixed states in this new setting, making contact with the usual Hilbert space description.

In Section \ref{Fibration} we will describe the fibration of co-adjoint orbits in the case of QM, as a nesting of spaces of states, introducing the main objects and setting the basis for the subsequent constructions. 

Section \ref{SYLODE} is devoted to make the notion of symmetric logarithmic differential geometrically precise, and the main constructions that will be needed for the main results will be expounded. 

Finally, in Section \ref{FishT} we will use the outlined constructions to understand the geometric properties of the Fisher Tensor, as defined in \cite{ES2}. We will prove how its antisymmetric part can be seen as the pullback of the Kirillov-Kostant-Souriau symplectic form, while its symmetric part is closely related to the quantum Fisher information index and with the K\"ahler structures induced by the Hermitian product on $\mathfrak{u}^*(n)$.

Appendices \ref{App:coad} and \ref{App:generalities} will contain a review of some basic literature results on co-adjoint orbits and compact Lie groups, that will be used throughout the paper.


\section{Spaces of mixed states} \label{Mixed}
In this section we will review some aspects of the theory of co-adjoint orbits for the case $G=U(n)$, the group of unitary $n$-dimensional matrices. This will allow us to set finite dimensional quantum mechanics in a geometrically precise framework, starting from the usual Hilbert space description and generalizing the projective Hilbert space description of density matrices. A review of the general theory of co-adjoint orbits for general Lie groups can be found in Appendix \ref{App:coad}.

\subsection{The unitary group and quantum mechanics}\label{QM}
The classical construction of symplectic manifolds from the orbits of the co-adjoint action works for any Lie group (see Appendix \ref{App:coad}), yet it is worthwhile to have a closer look at the case $G=U(n)$, the compact group of $n\times n$ unitary matrices. As we will see,  such orbits  turn out to be the spaces of states for ordinary finite dimensional quantum mechanics, in the density matrix representation.  


The Lie algebra $\mathfrak{u}(n)$ of $U(n)$ given by $n\times n$ anti-Hermitian matrices has a maximal abelian subalgebra $\mathfrak{t}\subset\mathfrak{u}(n)$ given by  diagonal pure imaginary matrices. Also, the dual space $\mathfrak{u}^*(n)$ can be identified with the Lie algebra $i\mathfrak{u}(n)$ of $n\times n$ Hermitian matrices 
thanks to the compactness of $\mathfrak{u}(n)$, which also allows to identify the adjoint and co-adjoint representations. More specifically, here and in the following we make use of the pairing
\begin{equation}
\langle g , m\rangle := i Tr[gm] \;\; , g\in \mathfrak{u}^*(n), \, m\in \mathfrak{u}(n)
\end{equation}
which is equivariant with respect to the (co)-adjoint action.

Given a choice of ordering of the eigenvalues of matrices in $\mathfrak{t}$ (see Appendix \ref{App:generalities}), one has a one-to-one correspondence between elements in the chosen positive Weyl chamber $i\mathcal{C}^+\subset\mathfrak{t}$ and the set of all (co-)adjoint orbits \cite[Lemma 3, Chapter 5, Section 2.1]{Kir}. So the elements $\rho_0\in\mathfrak{t}$ are the preferred reference points for the orbits, and the classification of these orbits depends solely on the multiplicity of the eigenvalues of $\rho_0$. For instance, the choice $\rho_0=\mathrm{diag}_n\{1,0,\dots,0\}$ will give rise to the orbit 
\[
\orb_{\rho_0}\simeq\qsp{U(n)}{U(1)\times U(n-1)}.
\]

The precise geometric construction and properties will be expounded in the following sections, but we would like to give here an intuition and a physical motivation for this construction.

The usual setting for  finite dimensional quantum mechanics is a Hilbert space $H$,  with \emph{states} being represented by vectors $\psi\in H$.  Such vectors usually arise as eigenvectors of a given reference observable $\mathsf{O}$ (a self adjoint operator), describing for instance the possible outcome of a measurement. All observable quantities in quantum mechanics are nevertheless dependent on less information than the vector $\psi$ itself: they depend neither on  its $\mathbbm{R}^+$-valued norm nor on its overall $U(1)$-phase. For this reason, the correct space of states to consider is rather the projective Hilbert space 
\begin{equation}\label{projhilb}
P(H)\simeq\qsp{H^\times}{\mathbbm{R}^+\times U(1)}\simeq\qsp{S^{2n-1}}{U(1)}\simeq\qsp{U(n)}{U(n-1)\times U(1)}
\end{equation} 
where $H^\times = H -\{0\}$.

This construction generalizes to situations in which the system is not anymore described by a single vector in a Hilbert space, but rather by an orthonormal basis $\{\psi_i\}_{i=1\dots k}$ for some $k$-dimensional subspace of $H$. In this case we speak of \emph{mixed states}, and the construction of the space of states becomes more involved. The \emph{generalized spheres}, taking the place of the odd spheres $S^{2n-1}$ appearing in Eq. \eqref{projhilb}, will be given by the Stiefel manifolds for the complex field 
\begin{equation}
\mathcal{B}^{n,k}:=\qsp{U(n)}{U(n-k)} \label{bkn}
\end{equation}
since $\mathcal{B}^{n,k}$ is the $(2nk-k^2)$-dimensional manifold of ordered $k$-tuples of orthonormal vectors in $\mathbb{C}^n\simeq H$ or, said in equivalent way, the space of all matrices $g \in \mathfrak{u}^*(n)$ with fixed rank $k$. 

Assume that the multiplicities of the $\psi_i$'s as eigenvectors of $\mathsf{O}$ are all equal to $1$, meaning that the eigenvalues are distinct. Then, to account for a phase degeneracy for each $\psi_i$ we must quotient out $\mathcal{B}^{n,k}$ by a $U(1)$-factor for each vector in the basis, obtaining:
\begin{equation}
\mathcal{P}^{n,k}:=\qsp{U(n)}{U(n-k)\times U(1)^k}.
\end{equation}
Here $\mathcal{P}^{n,k}$ represents the set of rank $k$ Hermitian matrices with the same distinct, non-zero eigenvalues. 
In  this quotient we are identifying matrices that  are isospectral and share the same eigenspaces, which are one-dimensional in this example. 


If we admit now a degenerate eigenvalue with  multiplicity $d$, the counting of possible choices for the orthonormal system  must take into account to the $U(d)$-action that reshuffles the eigenvectors in the corresponding eigenspace. 
Thus, in place of $\mathcal{P}^{n,k}$, we get the coset
\begin{equation}\label{generalcoset}
\qsp{U(n)}{U(n-k)\times U(d)\times U(1)^{k-d}}.
\end{equation}
Notice that the integers $m_i$ appearing in the $U(m_i)$ factors in the quotient form a partition of $n$, that is to say: $m=\{m_i\}=\{n-k, d, 1,\dots,1\},\ m_i>0,\ \sum_im_i=n$.

The viewpoint we have presented here will be replaced by an equivalent description in terms of  \emph{density matrices} (cf. Definition \ref{Def.mixedstate}). We will see that coset spaces such as \eqref{generalcoset} arise naturally and in a straightforward way.

\subsection{Mixed states}
In section \ref{QM} we discussed how a description of quantum mechanics in the Hilbert space formalism becomes involved when we want to include mixed states in the picture. It is possible to  adopt another point of view \cite{ercolessi} and  greatly simplify some aspects of the geometric description of quantum states.  In order to do this we have to clarify what we mean by mixed state and introduce some definitions.

\begin{definition}\label{Def.mixedstate}
A \ev{mixed state} in an n-dimensional Hilbert space is specified by a \ev{density matrix}, i.e. by a Hermitian matrix $\rho$ such that $\Tr{\rho}=1$ and $\rho$ is positive definite.  Regarding $\rho$ as an element of $\mathfrak{u}^*(n)$, we will denote by ${\mathcal{D}}\subset \mathfrak{u}^*(n)$ the set of all mixed states. 
\end{definition}

\begin{remark}
The set ${\mathcal{D}}^k\subset {\mathcal{D}}$ of all rank $k$ mixed states, can be identified with the intersection of the positive cone in the space $B^n_k$ defined in (\ref{bkn}) with the affine subspace of unit trace matrices.  The $\mathcal{D}^k$'s are also called \emph{strata}. 

It has been shown \cite{grab} with techniques that are different from the ones presented here, that the stratum ${\mathcal{D}}^k$ is a smooth connected submanifold of $\mathfrak{u}^*(n)$ for any $k$ (of dimension $2nk-k^2-1$). Furthermore, the stratification of the different $\mathcal{D}^k$ in $ \mathcal{D}$ is  \emph{maximal} (for $k>2$), in the sense that any smooth curve in  $\mathfrak{u}^*(n)$ that  belongs entirely to $ {\mathcal{D}}$  actually lies on a single stratum ${\mathcal{D}}^k$. This implies that, also from a differentiable point of view, we can restrict  our attention to  a single ${\mathcal{D}}^k$, i.e. to the space of density matrices with fixed rank. 

We observe furthermore that $\rho$ is actually a \emph{pure state} if and only if it is a rank one projector, i.e. if it satisfies the additional property $\rho^2=\rho$.
\end{remark}

Now, let $\rho\in  {\mathcal{D}}$. From the spectral theorem we know that there exists a unitary matrix $U\in U(n)$ and an \ev{$\mathbf{n}$-decomposition of unity} $\kappa=\{k_1,\dots,k_n\}$, $\sum k_i=1$ and $k_i\geq 0\ \forall i$, such that
\begin{equation}
\rho=\Ad_U \rho_0,\ \ \ \rho_0=\mathrm{diag}(\kappa)\equiv\mathrm{diag}\{k_1,\dots,k_n\}.
\end{equation} 
Therefore, to each $\rho\in\mathcal{D}$ we associate a unique partition of $n$ by counting the multiplicities $m_\kappa=\{m_i\}$ of the eigenvalues in $\kappa$. Let us consider the  (co-)adjoint orbit   passing through a given density matrix $\rho$ in the Lie algebra $\mathfrak{u}^*(n)$. As we said,  such a mixed state is determined by $\kappa$, a decomposition of unity, and by $m_\kappa$, its relative partition of $n$. Its stabiliser subgroup $H_\rho$ depends solely on the partition $m_\kappa$. As a matter of fact
\begin{equation}
H_\rho\simeq\bigtimes\limits_{m_i\in m_\kappa }U(m_i).
\end{equation}

Then, we may associate with a mixed state $\rho$ the adjoint orbit given by
\begin{equation}
\mathcal{O}_{\rho_0}\mathrm{:=}\{\rho\in i\mathfrak{u}(n)\ \big|\ \rho=\Ad_U\rho_0,\ U\in U(n)\}\simeq\qsp{U(n)}{\bigtimes\limits_{m_i\in m_\kappa}U(m_i)}.
\end{equation}

The \emph{topological type} of quotient  depends only on the stabiliser subgroup, which in turn depends on the partition of $n$ induced by the eigenvalues of $\rho_0$. The actual values of $\kappa$ select a particular orbit, which then represents the space of states  with \emph{fixed eigenvalues} $\kappa$ and corresponding eigenspaces. The space of orbits of the same topological type will represent the space of states with $m_\kappa$ eigenvalue degeneracy, thus preserving only the eigenspace structure. 

\begin{remark}
To clarify the difference between the topological type of an orbit and a choice of one particular instance, consider the case of $U(2)$. Each coadjoint orbit is a sphere, as it is isomorphic to the coset $\qsp{U(2)}{U(1)\times U(1)}$. The particular choice of an element in the (positive) Weyl chamber selects a given radius for the sphere, while the union of all co-adjoint orbits such that $\Tr{\rho}=1$ yields the unit ball in $\mathfrak{u}(2)$. In the mixed states language this is sometimes called the Bloch ball \cite{Preskill}.

Notice that this distinction between orbits wouldn't be so relevant if we discarded the information on the complex structure and the compatibility with the symplectic structure. Metrically different orbits are nevertheless to be distinguished for our purposes.
\end{remark}

\begin{definition}
We define the \ev{space of $\kappa$-mixed states} to be the co-adjoint orbit of the unitary group passing through (or represented by) the diagonal element $\rho_0=\mathrm{diag}\{\kappa\}$.
\end{definition}

The space of states will be representing all possible states \emph{with the same mixing}. This is equivalent to asking how many different orthonormal frames one can find requiring that the basis vectors be eigenvectors of a  given operator, with { multiplicity structure given by a decomposition of unity $\{m_\kappa\}$}. 

Such a space of $\kappa$-mixed states is a straightforward generalization of the projective space of pure states.

\begin{remark}
Notice that there is no preferred choice of ordering for the basis of eigenvectors, and we could assume that the eigenvalues with the least multiplicity are put first. For instance, if we are given the ``unordered'' list $\widetilde{\kappa}=\{\mu, \zeta, \mu, \mu, \alpha, \beta,\beta, \nu\}$ we could consider instead the ordered list $\kappa=\{\zeta,\alpha, \beta, \beta, \mu,\mu,\mu \}$ and the respective partition of unity would then be $m_\kappa=<1,1,2,3>$. Another very common choice it to require that the eigenvalues be weakly increasing.

More precisely, this ordering is associated with an ordering in the root system, and to the choice of a preferred Weyl chamber in the subalgebra $\mathfrak{t}$ (see Appendix \ref{App:generalities}).
\end{remark}

\section{Fibration} \label{Fibration}
In this section we will analyse the rich geometric structure that co-adjoint orbits have in relation with one another. In general, under some assumptions that are very mild when looking at compact groups, generic orbits fibrate over lower dimensional ones. In the case of quantum mechanics we may interpret this fibration as a nesting of spaces of states, and this nesting is compatible with the geometric structure.

In this section the Lie group $G$ will be always required to be compact and, further on, we will require $G=U(n)$ in order to make contact with the theory of mixed states.

\subsection{General theory}\label{genth}
As explained in Appendix \ref{App:coad}, Eq. \eqref{Pbdl}, each co-adjoint orbit can be seen as the base space of a principal bundle, the structure group {  varying} according to the base point through which the orbit is considered. We can choose the representative of the orbit to be an element in the \emph{Cartan subalgebra} $\mathfrak{t}\subset\mathfrak{g}$. Different points in the Cartan subalgebra $\xi_\bullet,\eta_\bullet\in\mathfrak{t}$, will describe different orbits, whose topological type varies depending on the the stabiliser of $\xi_\bullet$ (resp $\eta_\bullet$), and which are diffeomorphic if and only if the stabiliser subgroups ($H_{\xi_\bullet}$, resp. $H_{\eta_\bullet}$) are diffeomorphic.

Let $\xi_\bullet$ and $\eta_\bullet$ be such that their stabilizing sub-algebras are ordered: $\mathfrak{h}_{\eta_\bullet}\subset\mathfrak{h}_{\xi_\bullet}$, i.e. $H_{\eta_\bullet}$is a closed subgroup of $H_{\xi_\bullet}$. Then, it is possible to consider the fibre bundle associated to the principal bundle
\begin{equation}
H_{\xi_\bullet}\rightarrow G \rightarrow \mathcal{O}_{\xi_\bullet}
\end{equation}
given by \cite{Guill}
\begin{equation}
G\times_{H_{\xi_\bullet}}\left(\Ad_{H_{\xi_\bullet}} \eta_\bullet\right)\simeq \qsp{\left(G\times\left(\Ad_{H_{\xi_\bullet}} \eta_\bullet\right)\right)}{H_{\xi_\bullet}}\simeq\mathcal{O}_{\eta_\bullet}
\end{equation}
which is diffeomorphic to the orbit through the point $\eta_\bullet$. This means that the \emph{bigger} orbit $\mathcal{O}_{\eta_\bullet}$ is the total space of a fibre bundle over $\mathcal{O}_{\xi_\bullet}$, with fibre  diffeomorphic to a co-adjoint orbit of {$H_{\xi_\bullet}$, namely:}
\begin{equation}\label{Seq}
\qsp{H_{\xi_\bullet}}{H_{\eta_\bullet}}\rightarrow \mathcal{O}_{\eta_\bullet}\rightarrow \mathcal{O}_{\xi_\bullet}
\end{equation}

The projection that defines the fibration is given by:
\begin{equation}
\pi(\eta)=\pi(\Ad_g\eta_\bullet) = \Ad_g\xi_\bullet
\end{equation}
for $\eta\in\mathcal{O}_{\eta_\bullet}$ and $g\in G$ , while the tangent map $\pi_*$ is given by
\begin{equation}
\pi_*\colon\big|_{\eta_\bullet}\begin{array}{rcl}
T_{\eta_\bullet}{\mathcal{O}_{\eta_\bullet}} & \longrightarrow & T_{\xi_\bullet}{\mathcal{O}_{\xi_\bullet}}\\
X_{\eta_\bullet}&\longmapsto& \ad_{X_{\eta_\bullet}}\xi_\bullet.
\end{array}
\end{equation}

Recall that compact Lie groups allow for a canonical identification of the Lie algebra and its dual, and  by means  of the $Ad$-invariance of the Killing form, and whenever $\xi_\bullet$ is a split point, i.e. the Lie algebra splits (equivariantly) as
\begin{equation}\label{split}
\mathfrak{g}=\mathfrak{h}_{\xi_\bullet} \oplus \mathfrak{n}_{\xi_\bullet}
\end{equation}
we can choose an affine subspace of $\mathfrak g$ of the form
\begin{equation}
N_{\xi_\bullet}:=\xi_\bullet + \mathfrak {h}_{\xi_\bullet}^{\perp}
\end{equation} 
requiring that it be transversal to the orbit $\mathcal O_{\eta_{\bullet}}$ for some $\eta_\bullet\in N_{\xi_\bullet}$. Following \cite[Theorem 2.3.3]{Guill}, this choice is equivalent to finding a (symplectic) connection for the fiber bundle \eqref{Seq}. In particular, for $G$ compact, every point $\eta_\bullet$ in $N_{\xi_\bullet}$ is a split point and the transversality condition is always satisfied, so we really only have to check that $\mathfrak{h}_{\eta_\bullet}\subset\mathfrak{h}_{\xi_\bullet}$ or equivalently that $\mathfrak{n}_{\eta_\bullet}\supset\mathfrak{n}_{\xi_\bullet}$, where both complements $\mathfrak{n}_\lambda\mathrm{:=}\mathfrak{h}_\lambda^\bot$, with $\lambda\in\{\xi_\bullet,\eta_\bullet\}$, are taken with respect to an invariant inner product, like the Killing form. Equivariance is guaranteed by the fact that there is a $G_{\xi_{\bullet}}$-equivariant identification between $N_{\xi_\bullet}$ and the dual space $\mathfrak g_{\xi_\bullet}^{*}$. Notice that $\mathfrak{n}_\lambda$ is not in general a subalgebra, but rather just a vector subspace of $\mathfrak{g}$.

In other words, consider the sequences of vector spaces for $\lambda\in\{\xi_\bullet,\eta_\bullet\}$
\begin{equation}\label{seq}
0\rightarrow\mathfrak{h}_\lambda \rightarrow \mathfrak{g}\rightarrow T_\lambda\mathcal{O}_\lambda\simeq\mathfrak{n}_{\lambda}\rightarrow 0;
\end{equation}
if the vertical distribution is pointwise isomorphic to the quotient $\qsp{\mathfrak{h}_{\xi_\bullet}}{\mathfrak{h}_{\eta_\bullet}}$ a choice of embedding of $\mathfrak{n}_{\xi_\bullet}$ in $\mathfrak{n}_{\eta_\bullet}$ will define the horizontal distribution. We can  interpret the inclusion $\mathfrak{n}_{\xi_\bullet}\subset\mathfrak{n}_{\eta_\bullet}$ as an inclusion of tangent spaces: ``$T_{\pi(\eta)}\mathcal{O}_{\xi_\bullet}\subset T_{\eta}\mathcal{O}_{\eta_\bullet}\!\!$" for $\eta\in\orb_{\eta_\bullet}$, and what this truly means is that the splitting $\mathfrak{g}=\mathfrak{h}_{\xi_\bullet}\oplus\mathfrak{n}_{\xi_\bullet}$ { yields} a natural affine connection on those associated bundles given by the orbits $\mathcal{O}_{\eta_\bullet}$ whose stabiliser { $H_{\eta_\bullet}$ is a closed subgroup of $H_{\xi_\bullet}$. As a matter of fact, since the splitting is equivariant \cite{Guill} and can be cast at every point of the total space, it yields} a smooth distribution in $T\mathcal{O}_{\eta_\bullet}$, i.e. a smooth, equivariant assignment of an horizontal subspace of the tangent space at every point. 

So, summarising, the tangent spaces to $\mathcal{O}_{\eta_\bullet}$ and $\mathcal{O}_{\xi_\bullet}$ are isomorphic (as vector spaces) respectively to $\mathfrak{n}_{\eta_\bullet}$ for the total space and to ${\mathfrak{n}_{\xi_\bullet}}$ for the base space, and the inclusion ${\mathfrak{n}_{\xi_\bullet}}\subset\mathfrak{n}_{\eta_\bullet}$ defines the horizontal distribution in $T\orb_{\eta_\bullet}$.

\begin{remark}
All tangent spaces $T_\lambda\orb_{\lambda_\bullet}$ are isomorphic to $\mathfrak{n}_{\lambda_\bullet}$ \emph{via} the adjoint action. For the sake of clarity we shall distinguish the images of the tangent embeddings from one another (cf. Eq. \eqref{natdiff}).
\end{remark}

\begin{remark}Notice that the sequence \eqref{seq} splits, and since $T_\lambda\mathcal{O}_\lambda\simeq\mathfrak{n}_{\lambda}$ we have that there exists a section of the surjective map $\phi\equiv\ad\lambda: \mathfrak{g}\longrightarrow T_\lambda\mathcal{O}_\lambda$. As a matter of fact, if we are given a tangent vector $v_\lambda\in T_\lambda\mathcal{O}_\lambda$ we can obtain its preimage in $\mathfrak{n}_{\lambda}$. Let us call such a map 
\begin{equation}\label{phimap}
\phi^{-1}: T_\lambda\mathcal{O}_\lambda\longrightarrow\mathfrak{n}_{\lambda}.
\end{equation}
More explicitly, if $v_\lambda = \ad_K\lambda$ with $K\in\mathfrak{g}$, we have {  $\phi^{-1}(v_\lambda)=K\big|_{\mathfrak{n}_{\lambda}}$}.
\end{remark}

Consider now the ($G$-equivariant) embedding

\begin{equation}\label{inclusion}
\iota: \orb_{\lambda_\bullet} \longrightarrow \mathfrak{g}
\end{equation}
and its ($G$-equivariant) tangent map
\begin{equation}\label{tangemb}
\iota_*: T\orb_{\lambda_\bullet} \longrightarrow T\mathfrak{g}
\end{equation}

Notice that, since the tangent space to $\orb_{\lambda_\bullet}$ above every point $\lambda$ is identified with $\mathfrak{n}_{\lambda}$, the tangent map will be an isomorphism onto its image $\mathfrak{n}_{\lambda}$. This means that it defines a Lie algebra valued 1-form that pointwise looks like:

\begin{equation}\label{natdiff}
\iota_*\big|_\lambda\equiv d\lambda \in T^*_\lambda\orb_{\lambda_\bullet} \otimes\mathfrak{n}_{\lambda}
\end{equation}
where by $d\lambda$ we \emph{also} mean the covector obtained by applying the DeRham differential to $\lambda$, as it can be expressed in a local chart (see for instance \cite[Eq. 50]{ES1}), although this is the derivative of the embedding map and it does not depend on a local chart.  

In what follows we will always use the notation $d\lambda$ for simplicity; what we mean is that the action of $d\lambda$ on a tangent vector $v_\lambda\in T_\lambda\orb_{\lambda_\bullet}$ yields its \emph{bare or tautological} value, { denoted by} $d\lambda(v_\lambda)$. When $G$ is a group of matrices, say for instance $U(n)$, the value of $d\rho$, with $\rho\in\orb_{\rho_0}$ will be the parametrized matrix of differentials, obtained by differentiating all the entries with respect to the $\mathrm{dim}(G) - \mathrm{dim}(H)$ independent local coordinates.

\begin{remark}
The 1-forms $d\eta$ and $d\xi$ can be seen as 1-forms with values in the ring of functions on the quotient space, following the sequence \eqref{Seq}. This induces a sequence of Lie algebra modules at the level of the tangent spaces, so that quantum states can be seen as fiberwise linear functions on the tangent space.
\end{remark}

\subsection{The fibration for mixed states}\label{maxfibration}
Let us fix $G=U(n)$ in what follows. We will specify the previous discussion for this particular case, in order to connect with the theory of mixed states.

We are ready now to study how to the fibration of co-adjoint orbits is interpreted when dealing with the space of states of quantum mechanics. Let us pick a  generic mixed state $\rho_0$, i.e. a mixed state with $m_{\rho_0}=\{1,\dots,1\}$, and let $\widetilde{\rho}_0$ represent any other mixed state. The theory of co-adjoint orbits for the compact group $U(n)$ applies to the present case and the stabiliser of  generic mixed states like $\rho_0$ is always a maximal torus $T^n$, which is also the minimal stabiliser. This means that the orbits associated with such states are those with maximal dimension, and they are diffeomorphic to to the coset:
\begin{equation}\label{fullflag}
\mathcal{F}^n\simeq\qsp{U(n)}{T^n}
\end{equation}
which is  also referred to as \emph{complete flag manifold}.

We can conclude that the mixed states $\rho_0$ and $\widetilde{\rho}_0$ define a fibration through the associated bundle
\begin{equation}
U(n)\times_{H_{\widetilde{\rho}_0}}\left(\Ad_{H_{\widetilde{\rho}_0}} \rho_0\right)\simeq \mathcal{O}_{\rho_0}
\end{equation}
represented by the diagram:
\begin{align}
\xymatrix{
\qsp{H_{\widetilde{\rho}_0}}{T^n}\ar@{^{(}->}[r] & \orb_{\rho_0}\simeq\qsp{U(n)}{T^n} \ar@{->>}[d] \\
& \orb_{\widetilde{\rho}_0}\simeq\qsp{U(n)}{H_{\widetilde{\rho}_0}}.
}
\end{align}

If the stabiliser of the mixed state is, say, $H_{\widetilde{\rho}_0}=U(m_1)\times\dots\times U(m_N)$, with $N$ being the length of $m$, the fibre will be isomorphic to the orbit 
\begin{equation}\begin{aligned}
\qsp{U(m_1)\times\dots\times U(m_k)}{T^n}&\simeq\qsp{U(m_1)}{T^{m_1}}\times\dots\times\qsp{U(m_N)}{T^{m_N}}\\
&\simeq\bigtimes\limits_{i=1}^N\mathcal{F}^{m_i}
\end{aligned}\end{equation}
where we set $\mathcal{F}^{m_i}\simeq\qsp{U(m_i)}{T^{m_i}}$ to denote the i-th full flag manifold associated with \emph{some subset} $\tau_i$ of the partition of unity $\kappa$. It corresponds to the $m_i$ eigenvalues of $\rho_0$ from the $(l_i=\sum_j^im_j)$-th eigenvalue to the $(l_i+m_i)$-th. Then the fibration reads
\begin{align} \label{general nesting}
\xymatrix{
\bigtimes\limits_{i=1}^N\mathcal{F}^{m_i}\ar@{^{(}->}[r]& \mathcal{F}^n\ar@{->>}[d]\\
& \mathcal{O}_{\widetilde{\rho}_0}.
}
\end{align}

This general property of co-adjoint orbits is interpreted in the context of mixed states as a nesting of spaces of states with different dimensions and mixing coefficients. When $\widetilde{\rho}_0$ is a pure state, its stabiliser (maximal in U(n)) is given by $U(n-1)\times U(1)$ and the orbit under the adjoint action is diffeomorphic to the projective space:
\begin{equation}
\mathbbm{CP}^{n-1}\simeq\qsp{U(n)}{U(n-1)\times U(1)}.
\end{equation}

Furthermore, we have that $T^n\subset U(n-1)\times U(1)$, and if we consider the splittings in the Lie algebra
\begin{equation}
\mathfrak{u}(n)\simeq
\mathfrak{t}\oplus\mathfrak{n}\simeq\widetilde{\mathfrak{k}}\oplus\widetilde{\mathfrak{n}}
\end{equation}
where $\widetilde{\mathfrak{k}}$ is the Lie algebra of the stabiliser $U(n-1)\times U(1)$ and $\widetilde{\mathfrak{n}}$ is its orthogonal complement, we have the vector space inclusions
\begin{equation}\begin{aligned}
\mathfrak{t}&\subset\widetilde{\mathfrak{k}},\\ \mathrm{Im}(\ad{}\widetilde{\rho}_0)\simeq \widetilde{\mathfrak{n}}&\subset\mathfrak{n}\simeq\mathrm{Im}(\ad{}\rho_0)
\end{aligned}\end{equation}
yielding the fibration
\begin{equation}\label{maxminorb}
\mathcal{F}^{n-1} \longrightarrow \mathcal{F}^n \longrightarrow \mathbb{CP}^{n-1}.
\end{equation}

This argument is recursive in $n$, and choosing generic mixed states and pure states of different dimensions we get the sequence of fibrations:
\begin{align} \label{multiple nesting}
\xymatrix{
\mathbbm{CP}^{1}\simeq\mathcal{F}^{2}\ar@{^{(}->}[r]&\mathcal{F}^{3}\ar@{->>}[d]\ar@{^{(}->}[r]&\mathcal{F}^4\ar@{->>}[d]&\dots &\mathcal{F}^{n-1}\ar@{^{(}->}[r] \ar@{->>}[d]  & \mathcal{F}^n \ar@{->>}[d]\\
&\mathbbm{CP}^{2}\ar@{^{(}->}[r]&\mathbbm{CP}^3&\dots&\mathbbm{CP}^{n-2} \ar@{^{(}->}[r]& \mathbbm{CP}^{n-1}.
}
\end{align}

The spaces on top are the spaces of generic mixed states of increasing dimension, and the bases are the projective spaces of pure states. Even though the pure states space $\mathbb{CP}^1$ and the mixed states space $\mathcal{F}^2$ are topologically the same, they are different \emph{as spaces of quantum states}. Indeed they are associated with different representatives in the Lie algebra, and it is worthwhile to distinguish them under this perspective.

\subsubsection{A worked-out example}
To fix the ideas, consider the basic example of $i\mathfrak{u}(3)$, with the generators given by the Gell-Mann matrices $\lambda_i$ plus the identity matrix. The fibration reads
\begin{equation}\label{U(3)}
\mathcal{F}^2 \longrightarrow\mathcal{F}^3\simeq\orb_{\rho_0}\stackrel{\pi}{\longrightarrow}\mathbbm{CP}^2\simeq\orb_{\widetilde{\rho}_0}.
\end{equation}
The \emph{normal} subspaces are generated by
\begin{equation}\begin{aligned}
\mathfrak{n}&=\mathrm{Span}\{\lambda_1,\lambda_2,\lambda_4,\lambda_5,\lambda_6,\lambda_7\}\\
\mathfrak{\widetilde{n}}&=\mathrm{Span}\{\lambda_1,\lambda_2,\lambda_4,\lambda_5\}
\end{aligned}\end{equation}
and the tangent maps, computed above the reference points $\rho_0$ and $\widetilde{\rho}_0$, are expanded as \cite{ES2}:
\begin{equation}\begin{aligned}
d\rho_0&=D_1\lambda_1 + D_2\lambda_2+ D_4\lambda_4+ D_5\lambda_5+ D_6\lambda_6+ D_7\lambda_7\\
d\widetilde{\rho}_0&=D_1\lambda_1 + D_2\lambda_2+ D_4\lambda_4+ D_5\lambda_5
\end{aligned}\end{equation}
where the $D_i$'s are $\R$-valued 1-forms. The values of the 1-forms above all other points are obtained through the (co-)adjoint action in an equivariant fashion.

It is clear from this example how the splitting actually defines the horizontal sub-bundle of the tangent space to be the projection of vectors of $\mathfrak{n}$ onto the subspace $\mathrm{Span}\{\lambda_1,\lambda_2,\lambda_4,\lambda_5\}$. Moreover, the remaining part, namely $\mathrm{Span}\{\lambda_6,\lambda_7\}$ is isomorphic (as a two-dimensional vector space) to $\mathrm{Span}\{\sigma_1,\sigma_2\}$, the off diagonal Pauli matrices that generate the normal complement
\begin{equation}
\mathrm{Span}\{\sigma_1,\sigma_2\}\simeq \qsp{\mathfrak{su}(2)}{\mathfrak{t}_2}\simeq T_x\mathcal{F}^2\simeq\mathrm{ker}(\pi)\simeq \mathrm{Span}\{\lambda_6,\lambda_7\}
\end{equation}
which defines the natural vertical sub-bundle of $T\mathcal{F}^3$ with the projection $\pi$ given in \eqref{U(3)}.

\subsection{The $\mathbb{D}$-map}
The following discussion is important to understand how the Kirillov-Kostant-Souriau (KKS) symplectic form is computed (cf. \eqref{KKS} in Appendix \ref{App:coad}) and to put on the same footing what will be discussed in Section \ref{SYLODE} and subsequent.

Observe that the tangent embedding in \eqref{tangemb}, which is now denoted by $d\rho\equiv \iota_*\big|_\rho:T_\rho\orb_{\rho_0} \longrightarrow \mathfrak{n}_{\rho}$, is not equivalent to the map $\phi^{-1}$ defined in \eqref{phimap}. As a matter of fact, given two tangent vectors $v_\rho,w_\rho\in T_\rho\orb_{\rho_0}$ we can retrieve their preimages $K_v = \phi^{-1}v_\rho$ such that $v_\rho = [K_\rho,\rho]$. 
Moreover, it is possible to promote $\phi^{-1}$ to a $\mathfrak{n}_{\rho}$-valued 1-form $\Phi^{-1}$ defined pointwise by
\[
\Phi^{-1}\big|_\rho = \phi_\rho^{-1} \equiv (\ad\rho)^{-1}{ \big|_{\mathfrak{n}_\rho}}: T_\rho\orb_{\rho_0} \longrightarrow \mathfrak{n}_{\rho}
\]
and we can interpret the object $\Phi^{-1}\wedge \Phi^{-1}$ as a 2-form acting precisely as the KKS-form acts on vector fields. Namely\footnote{Notice the factor of 2 in \eqref{KKSnew} to compensate the antisymmetrisation.}:
\begin{align}\label{KKSclass}
\Omega_{KKS}|_\rho(\ad_K\rho, \ad_H\rho) =& \Tr{\rho[K,H]}\\
\Omega_{KKS}|_\rho(v_\rho,w_\rho) =& 2\Tr{\rho \left(\Phi^{-1}\wedge \Phi^{-1}\right)\big|_\rho(v_\rho,w_\rho)}.\label{KKSnew}
\end{align}

Let us observe that the KKS sympletic form is, by construction, equivariant with respect to the (co)-adjoint action, and this will apply to all geometric tensors we are going to introduce in the following sections.

On the other hand, we can make sense of $v_\rho=\ad_K\rho\equiv[K,\rho]$ as a matrix by (implicitly) applying the map $d\rho$ to obtain an element of $\mathfrak{n}_{\rho}$. It turns out that (on $\rho_0=\mathrm{diag}\{k_i\}$) one may compute:
\begin{align*}
v_{\rho_0}{=}&[K_0,\rho_0] \\ 
(v_{\rho_0})_{ij}{=}&(k_i - k_j) (K_0)_{ij}
\end{align*}
where one should observe that we are implicitly using the isomorphism $d\rho_0:T_{\rho_0}\orb_{\rho_0}\stackrel{\sim}{\longrightarrow}\mathfrak{n}_{\rho_0}$. More explicitly, the equations can be written as:
\begin{align}\notag
d\rho_0(v_{\rho_0})=&[K_0,\rho_0] \\\label{KDRHO}
 (d\rho_0(v_{\rho_0}))_{ij}=&(k_i - k_j) (K_0)_{ij} \\\notag
 =& (k_i - k_j) (\phi^{-1}v_{\rho_0})_{ij}.
\end{align}

Thus, we obtain the diagram:
\begin{equation}
\xymatrix{
\mathfrak{n}_{\rho} \ar@{->}[r]_{\phi_\rho} & \ar@/_1pc/[l]_{\phi_\rho^{-1}}  \ar@/^1pc/[l]^{d\rho} T_\rho\orb_{\rho_0}
} 
\end{equation}
where again $d\rho$ gives the matrix value of a tangent vector $v_\rho$ tautologically and it is not a section of $\phi_\rho\equiv\ad\rho|_{\mathfrak{n}_{\rho}}$, i.e. $\phi_\rho\circ d\rho \not= \mathrm{id}$.
As a matter of fact there exists a map 
\begin{equation}\label{Dmap}
D:\begin{array}{ccc}
\mathfrak{n}_{\rho_0} & \longrightarrow & \mathfrak{n}_{\rho_0} \\
(A)_{ij} & \longmapsto & (k_i - k_j)(A)_{ij}
\end{array}
\end{equation}
such that (cf. Eq. \eqref{KDRHO})
\begin{equation}\label{Ddrho}
d\rho_0(v) = D\circ \phi_{\rho_0}^{-1}(v).
\end{equation}

This map can be {  trivially} extended to a vector bundle morphism covering the identity 
\begin{equation}
\xymatrix{
T\orb_{\rho_0}\otimes\mathfrak{n}_{\rho_0}\ar@{->}^{\mathbb{D}}[r] \ar@{->>}[d]& T\orb_{\rho_0}\otimes\mathfrak{n}_{\rho_0} \ar@{->>}[d] \\
\orb_{\rho_0}\ar@{->}^{\mathrm{Id}}[r]& \orb_{\rho_0}
}
\end{equation}
through the assignment $\mathbb{D}=\mathrm{id}\otimes (-D)$.

\begin{remark}
Recall that the fibre above each point $\rho$ consists of a vector space $\mathfrak{n}_\rho$ that is isomorphic to $\mathfrak{n}_{\rho_0}$ in an equivariant fashion. The map above any point $\rho=U\rho_0U^\dag$ on the $\mathfrak{n}_\rho$ component of the fibre is given by: $\mathbb{D}_\rho=id_\rho \otimes (\Ad_U\circ -D \circ \Ad_{U^{-1}})$. So we will use the shorthand notation, understanding the retraction to the reference point.
\end{remark}

Notice that we defined $\mathbb{D}$ changing the sign to $D$. This is motivated by the following argument, in order to have the right sign on Lie algebra valued 1-forms.

\begin{lemma}\label{skewD}
The map $\mathbb{D}$ is skew symmetric and invertible.
\end{lemma}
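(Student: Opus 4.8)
The plan is to verify both properties directly from the explicit description of $\mathbb{D}$ at the reference point and then invoke equivariance to extend the conclusions over the whole orbit. Recall that $\mathbb{D} = \mathrm{id}\otimes(-D)$, so it suffices to analyse the fibre map $-D$ on $\mathfrak{n}_{\rho_0}$, where $-D$ sends a matrix $A$ with entries $(A)_{ij}$ to the matrix with entries $-(k_i-k_j)(A)_{ij} = (k_j-k_i)(A)_{ij}$. Here the $k_i$ are the (distinct, in the generic case) eigenvalues of $\rho_0$, and $\mathfrak{n}_{\rho_0}$ consists precisely of the off-diagonal Hermitian matrices, i.e. those $A$ with $(A)_{ii}=0$ — on which $-D$ acts as a genuine, nonzero rescaling in each off-diagonal slot.

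First I would establish invertibility. On each off-diagonal position $(i,j)$ with $i\neq j$, the map $-D$ multiplies by the scalar $k_j-k_i$, which is nonzero exactly because $\rho_0$ is a generic mixed state with all eigenvalues distinct; hence $-D$ is diagonal in the obvious basis $\{E_{ij}\}_{i\neq j}$ of $\mathfrak{n}_{\rho_0}$ (suitably combined into Hermitian pairs) with nonzero eigenvalues, so it is invertible, with inverse sending $(A)_{ij}\mapsto (k_j-k_i)^{-1}(A)_{ij}$. The same holds fibrewise at any $\rho = U\rho_0 U^\dagger$ since, by the Remark preceding the lemma, $\mathbb{D}_\rho = \mathrm{id}_\rho\otimes(\Ad_U\circ(-D)\circ\Ad_{U^{-1}})$ is conjugate to $-D$ and therefore equally invertible. (If one wishes to allow degenerate $\rho_0$, one restricts to the relevant $\mathfrak{n}_{\rho_0}$, where only pairs $(i,j)$ with $k_i\neq k_j$ survive, so the argument is unchanged.)

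Next I would check skew-symmetry. The precise statement is that $-D$ is skew-adjoint with respect to the natural inner product on $\mathfrak{n}_{\rho_0}$ induced by the pairing $\langle A,B\rangle = \Tr{AB}$ (equivalently the Hermitian product on $\mathfrak{u}^*(n)$ used throughout). One computes, for $A,B\in\mathfrak{n}_{\rho_0}$,
\begin{equation}
\Tr{(-D A) B} = \sum_{i\neq j}(k_j-k_i)(A)_{ij}(B)_{ji},
\end{equation}
and interchanging the roles of $A$ and $B$ and relabelling the summation indices $i\leftrightarrow j$ yields
\begin{equation}
\Tr{A\,(-D B)} = \sum_{i\neq j}(k_j-k_i)(A)_{ij}(B)_{ji}\cdot(-1),
\end{equation}
because the factor $(k_j-k_i)$ changes sign under the swap while $(A)_{ij}(B)_{ji}$ is symmetric in the pair; hence $\Tr{(-DA)B} = -\Tr{A(-DB)}$, i.e. $-D$ is skew-symmetric. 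Since conjugation by the unitary $\Ad_U$ preserves the trace pairing, the fibrewise maps $\mathbb{D}_\rho$ inherit skew-symmetry, establishing the claim on all of $\orb_{\rho_0}$.

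The only genuine subtlety — the point I would flag as the place where care is needed rather than a deep obstacle — is bookkeeping about \emph{which} pairing makes the word ``skew-symmetric'' true: $-D$ is not skew as a matrix (it has zero diagonal but is not antisymmetric as an array of numbers), it is skew as an \emph{operator} on the inner-product space $(\mathfrak{n}_{\rho_0},\Tr{\cdot\,\cdot})$, and this is exactly the structure that makes $2\Tr{\rho(\Phi^{-1}\wedge\Phi^{-1})}$ in \eqref{KKSnew} antisymmetric and hence eligible to be a symplectic form; one should therefore state the lemma with this inner product understood. Apart from that, both halves are short index computations plus an appeal to the already-established equivariance of all the constructions.
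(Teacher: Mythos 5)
Your proof is correct and follows essentially the same route as the paper's: skew-symmetry is the same index computation showing $\Tr{D(A)\,B}=-\Tr{A\,D(B)}$ with respect to the trace (Killing) pairing, and invertibility follows because the scalars $k_i-k_j$ are nonzero on exactly those off-diagonal slots that actually belong to $\mathfrak{n}_{\rho_0}$, degenerate eigenvalues being excluded by the very definition of that complement. Your added remarks on equivariance and on which inner product is meant are consistent with the paper's (briefer) treatment.
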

\begin{proof}
A simple computation shows that $D$ is skew symmetric with respect to the Killing form in $\mathfrak{u}(n)${ , which we denote by round brackets $(,)$:}
\begin{align*}
(A,D(B))=\Tr{A_{ij} D(B)_{jl}}=&A_{ij}(\rho_j-\rho_i)B_{ji}= \\ 
=&-(\rho_i-\rho_j)A_{ij}B_{ji} = -\Tr{D(A)_{ij}B_{jk}}=(-D(A),B).
\end{align*}
Extending to any inner product on $T\orb_{\rho_0}\otimes\mathfrak{n}_{\rho_0}$, with respect to which $\mathrm{id}$ is trivially symmetric, we can deduce the skew symmetry of $\mathbb{D}$. 

To show the invertibility we have to analyse what happens for $k_i=k_j$. Recall that $D$ is an endomorphism in $\mathfrak{n}_{\rho_0}$; the condition of two eigenvalues of $\rho_0$ being equal means that the stabiliser subalgebra $\mathfrak{h}_{\rho_0}$ is larger, and the corresponding orthogonal complement is smaller. Therefore we obtain that the $(ij)$-component is trivially zero and there is nothing to invert, concluding the proof. 
\end{proof}

Using the bundle morphism and Lemma \ref{skewD} we have the following:
\begin{proposition}\label{intermedpull}
The two form $\Omega_\orb|_\rho:=\Tr{\rho d\rho\wedge d\rho}$ is the pullback of the KKS symplectic form along $\mathbb{D}$:
\begin{equation}
\Omega_\orb = \frac12 \mathbb{D}^*\Omega_{KKS}.
\end{equation}
\begin{proof}
We only need the following identity:
\[d\rho\wedge d\rho = \mathbb{D}^*\left( \Phi^{-1}\wedge \Phi^{-1}\right)
\]
where $d\rho=-\mathbb{D}^*\Phi^{-1}$ has the right sign ($\mathbb{D}=\mathrm{id}\otimes (-D)$) and agrees with Eq. \eqref{Ddrho}. Recalling how we have rewritten the KKS form \eqref{KKSnew}, we can conclude what we claimed.
\end{proof}
\end{proposition}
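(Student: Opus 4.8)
\emph{Proof proposal.} The plan is to reduce the statement to the single pointwise identity \eqref{Ddrho}, promoted to an equality of $\mathfrak{n}_{\rho_0}$-valued $1$-forms on $\orb_{\rho_0}$, and then to transport that identity through the wedge product and through the pairing $\Tr{\rho\,\cdot}$.

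First I would show that, as $\mathfrak{n}_{\rho_0}$-valued $1$-forms, $d\rho=-\mathbb{D}^{*}\Phi^{-1}$, where $\mathbb{D}^{*}$ denotes the insertion of the fibre endomorphism carried by $\mathbb{D}=\mathrm{id}\otimes(-D)$ into the $\mathfrak{n}_{\rho_0}$-slot. Over the reference point this is exactly Eq.~\eqref{Ddrho}: evaluating on $v=\ad_{K_0}\rho_0$ gives $d\rho_0(v)=D(\phi_{\rho_0}^{-1}v)=-(\mathbb{D}^{*}\Phi^{-1})(v)$, the sign sitting inside $\mathbb{D}$. Above a general point $\rho=\Ad_U\rho_0$ the identity follows by $U(n)$-equivariance: both $d\rho$ and $\Phi^{-1}$ are equivariant $1$-forms, and $\mathbb{D}_\rho=\mathrm{id}_\rho\otimes(\Ad_U\circ(-D)\circ\Ad_{U^{-1}})$ by the remark after \eqref{Dmap}, so conjugating the identity at $\rho_0$ by $\Ad_U$ yields it at $\rho$.

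Next, since $\mathbb{D}$ is a fibrewise-linear bundle morphism covering the identity of $\orb_{\rho_0}$, the induced operation $\mathbb{D}^{*}$ commutes with the exterior product of Lie-algebra-valued forms, and it commutes with the contraction $\Tr{\rho\,\cdot}$ because $\mathbb{D}$ leaves the base point $\rho$ untouched. Squaring the $1$-form identity then gives
\[
d\rho\wedge d\rho=(-\mathbb{D}^{*}\Phi^{-1})\wedge(-\mathbb{D}^{*}\Phi^{-1})=\mathbb{D}^{*}\bigl(\Phi^{-1}\wedge\Phi^{-1}\bigr),
\]
the two minus signs cancelling — which is exactly why $\mathbb{D}$ was defined with the extra sign relative to $D$. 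Contracting with $\rho$ and invoking the rewriting \eqref{KKSnew} of the KKS form,
\[
\Omega_\orb|_\rho=\Tr{\rho\,d\rho\wedge d\rho}=\Tr{\rho\,\mathbb{D}^{*}(\Phi^{-1}\wedge\Phi^{-1})}=\mathbb{D}^{*}\Tr{\rho\,\Phi^{-1}\wedge\Phi^{-1}}=\tfrac12\,\mathbb{D}^{*}\Omega_{KKS}|_\rho,
\]
which is the assertion; invertibility of $\mathbb{D}$ (Lemma~\ref{skewD}) moreover lets one read this equivalently as $\Omega_{KKS}=2\,(\mathbb{D}^{-1})^{*}\Omega_\orb$.

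The step I expect to be the real obstacle is making precise what ``pullback along $\mathbb{D}$'' means here: $\mathbb{D}$ is not a diffeomorphism of $\orb_{\rho_0}$ but a bundle automorphism of $T\orb_{\rho_0}\otimes\mathfrak{n}_{\rho_0}$, so one must first set up the action of $\mathbb{D}^{*}$ on forms valued in $\mathfrak{n}_{\rho_0}$ (and in $\mathfrak{n}_{\rho_0}^{\otimes 2}$ for the $2$-form $\Phi^{-1}\wedge\Phi^{-1}$ before tracing against $\rho$), and then verify that this extended $\mathbb{D}^{*}$ is multiplicative for the wedge and compatible with $\Tr{\rho\,\cdot}$, keeping careful track of signs so that the single minus in $\mathbb{D}=\mathrm{id}\otimes(-D)$ indeed produces a plus on $2$-forms. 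Once that bookkeeping is in place, the pointwise computation at $\rho_0$ together with the equivariant propagation and Eq.~\eqref{KKSnew} finish the proof routinely.
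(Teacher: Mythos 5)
Your proposal is correct and follows essentially the same route as the paper: establish $d\rho=-\mathbb{D}^{*}\Phi^{-1}$ from Eq.~\eqref{Ddrho} (with the sign absorbed into $\mathbb{D}=\mathrm{id}\otimes(-D)$), square it so the signs cancel to get $d\rho\wedge d\rho=\mathbb{D}^{*}(\Phi^{-1}\wedge\Phi^{-1})$, and trace against $\rho$ using the rewriting \eqref{KKSnew} of $\Omega_{KKS}$ to produce the factor $\tfrac12$. The only difference is that you spell out the equivariant propagation from $\rho_0$ and the bookkeeping of how $\mathbb{D}^{*}$ acts on $\mathfrak{n}_{\rho_0}$-valued forms, which the paper leaves implicit.
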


\begin{remark}
This is to justify the point of view adopted here, which might seem somehow arbitrary and unnecessarily involved.

The role of $\mathbb{D}$ will be relevant for the discussion on the symmetric logarithmic differential, introduced in Section \ref{SYLODE}, and its interplay with the KKS form. Intuitively, the pullback of $\Omega_{KKS}$ under $\mathbb{D}$  is put on the same footing as the pullback of usual differential $d\rho$ to the symmetric logarithmic differential under a map $\mathbb{L}$, of the same nature as $\mathbb{D}$.

Although the construction and computation of the KKS form are clear and well-established in the literature, the previous discussion will turn out useful in the understanding of the rest.
\end{remark}

\section{Symmetric Logarithmic Differential}\label{SYLODE}
In the literature on quantum information theory and quantum metrology, to generalize the idea of \emph{logarithmic derivative} of a probability distributions, various authors have considered and developed the notion of \emph{symmetric logarithmic derivative} for mixed states. In the original definition \cite{Braun, Burb}, it is an Hermitian matrix $\mathsf{L}$ satisfying the implicit equation
\begin{equation}
\partial\rho = \frac{1}{2}\left\{\mathsf{L},\rho\right\}
\end{equation}
where by $\partial\rho$ usually one means the matrix of derivatives of the entries of a mixed state $\rho$ with respect to some external parameter, say, $\theta$ and the curly brackets stand for the \emph{anticommutator} of matrices.

This definition has to be modified if one wishes to stress the geometric features of quantum states in the density matrices formalism. It can be \emph{globalized} by dropping the $\theta$-dependence and by considering the \emph{total differential} $d\rho$ instead of the matrix of derivatives. Doing so we notice that for the equation to make sense it is necessary to require that the hermitian matrix $\mathsf{L}$ be also a well defined Lie algebra valued 1-form. We give then the following definition:

\begin{definition} Given a mixed state $\rho=\Ad_U\rho_0$ we define a \ev{generalized symmetric logarithmic differential} (GSLD) to be an equivariant Lie algebra valued 1-form $\dl\rho\in\left(\Omega^1(\mathcal{O}_{\rho_0})\otimes i\mathfrak{u}(n)\right)^{U(n)}$ satisfying the implicit equation:
\begin{equation}\label{SLD}
d\rho=\frac{1}{2}\left\{\dl\rho,\rho\right\}
\end{equation}
where again the curly brackets stand for the anticommutator of matrices.
\end{definition}

Notice that this definition introduces a uniqueness issue regarding the possible solutions to equation \eqref{SLD} and its homogeneous analogous. A discussion of this issue was carried out first in \cite{ES2}. 

\subsection{Degeneration and uniqueness}

In the previous sections, (see Eq. \eqref{natdiff}),  we saw that $d\rho$ is an equivariant form, i.e. the value of $d\rho$ above any point $\rho=\Ad_U\rho_0$ is obtained through the following equation:
\begin{equation}
d\rho=\Ad_Ud\rho_0
\end{equation}
where $\rho_0$ is the diagonal representative of $\rho$ in the orbit $\orb_{\rho_0}$. For this reason any GSLD must also be equivariant: $\dl\rho=\Ad_U\dl\rho_0$, as we required in the very definition of $\dl\rho$
\begin{equation}
d\rho=\Ad_Ud\rho_0=\frac{1}{2}\Ad_U\left\{\dl\rho_0,\rho_0\right\}=\frac{1}{2}\left\{\Ad_U\dl\rho_0,\rho\right\}.
\end{equation}

This implies that we may always consider $\rho_0$ to be diagonal in $i\mathfrak{u}(n)$ and classify the solutions to \eqref{SLD} according to the eigenvalue multiplicity of the reference point $\rho_0$.

Consider now the matrix equation
\begin{equation}\label{matSLD}
A=\frac12\{X, \rho_0\}
\end{equation}
with $\rho_0\in\mathfrak{t}$ a mixed state as above (a positive definite Hermitian matrix with $\Tr{\rho}=1$). Equation \eqref{matSLD} is well defined in $i\mathfrak{u}(n)$ and we have the following
\begin{lemma}
Whenever $A$ is an element of $\mathfrak{n}_{\rho_0}$, the set of solutions of \eqref{matSLD} is an affine space, which will be denoted by $\mathsf{Log}_{\rho_0}(A)\in i \mathfrak{u}(n)$, and the particular solution is another element of $\mathfrak{n}_{\rho_0}$. This yields a well-defined linear automorphism $\mathsf{L}: \mathfrak{n}_{\rho_0}\longrightarrow \mathfrak{n}_{\rho_0}$.
\end{lemma}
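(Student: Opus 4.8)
The plan is to reduce the matrix equation \eqref{matSLD} to its entries, exploiting that $\rho_0=\mathrm{diag}\{k_1,\dots,k_n\}\in\mathfrak{t}$ is diagonal. For $X=(X_{ij})$ one has $(\{X,\rho_0\})_{ij}=(k_i+k_j)X_{ij}$, so \eqref{matSLD} is equivalent to the uncoupled scalar system $A_{ij}=\frac{1}{2}(k_i+k_j)X_{ij}$, $1\le i,j\le n$. It is also convenient to recall the explicit shape of the stabiliser and of its orthogonal complement at $\rho_0$: $\mathfrak{h}_{\rho_0}$ consists of the Hermitian matrices supported on the index pairs $(i,j)$ with $k_i=k_j$ (block-diagonal with respect to the eigenvalue partition), while $\mathfrak{n}_{\rho_0}=\mathfrak{h}_{\rho_0}^{\perp}$ consists exactly of the Hermitian matrices $A$ with $A_{ij}=0$ whenever $k_i=k_j$.

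First I would settle solvability together with the affine structure. For an index pair with $k_i+k_j>0$ the entry equation has the unique solution $X_{ij}=2A_{ij}/(k_i+k_j)$. For a pair with $k_i+k_j=0$, positivity of the spectrum ($k_i,k_j\ge 0$) forces $k_i=k_j=0$; but then $A_{ij}=0$ automatically, since $A\in\mathfrak{n}_{\rho_0}$, so that entry equation reads $0=0$ and $X_{ij}$ is free. Hence the solution set is nonempty and equals a particular solution plus the space of solutions of the homogeneous equation $\{X,\rho_0\}=0$, namely the Hermitian matrices supported on $\{(i,j):k_i=k_j=0\}$, i.e.\ a copy of $i\mathfrak{u}(m_0)$ inside $\mathfrak{h}_{\rho_0}$, with $m_0$ the multiplicity of the eigenvalue $0$. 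This is the affine subspace $\mathsf{Log}_{\rho_0}(A)\subset i\mathfrak{u}(n)$ (a single point when $\rho_0$ is strictly positive definite).

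Next I would isolate the distinguished solution. Define $X^0$ by $X^0_{ij}=2A_{ij}/(k_i+k_j)$ when $k_i+k_j>0$ and $X^0_{ij}=0$ otherwise. Then $X^0$ is Hermitian (because $A$ is, and $k_i+k_j$ is real and symmetric in $i,j$) and $X^0\in\mathfrak{n}_{\rho_0}$: if $k_i=k_j$ with both zero the entry is $0$ by definition, and if $k_i=k_j\ne 0$ then $A_{ij}=0$, so again $X^0_{ij}=0$. Moreover $X^0$ is the \emph{only} solution lying in $\mathfrak{n}_{\rho_0}$, since the difference of two such solutions lies in the homogeneous space, hence in $\mathfrak{h}_{\rho_0}$, and $\mathfrak{h}_{\rho_0}\cap\mathfrak{n}_{\rho_0}=0$. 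This defines $\mathsf{L}\colon\mathfrak{n}_{\rho_0}\to\mathfrak{n}_{\rho_0}$, $\mathsf{L}(A):=X^0$, which is linear by the entrywise formula. It is invertible because its candidate inverse is $X\mapsto\frac{1}{2}\{X,\rho_0\}$ restricted to $\mathfrak{n}_{\rho_0}$: this again lands in $\mathfrak{n}_{\rho_0}$ (the entries with $k_i=k_j$ are annihilated), and on $\mathfrak{n}_{\rho_0}$ every relevant pair has $k_i\ne k_j$, hence $k_i+k_j>0$, so the two maps are mutually inverse entry by entry.

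The one genuinely non-formal point — and the reason the hypothesis $A\in\mathfrak{n}_{\rho_0}$ cannot be dropped — is the compatibility between the two degeneracy loci: the anticommutator equation degenerates exactly on the pairs with $k_i+k_j=0$, which sit inside the zero-eigenvalue block of $\rho_0$, i.e.\ on pairs where any element of $\mathfrak{n}_{\rho_0}$ already vanishes. Thus restricting the source to $\mathfrak{n}_{\rho_0}$ removes precisely the degenerate directions and leaves an honest automorphism; verifying this compatibility is the main (and essentially only) obstacle.
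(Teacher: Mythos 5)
Your proof is correct and follows essentially the same route as the paper: the entrywise reduction $(\{X,\rho_0\})_{ij}=(k_i+k_j)X_{ij}$, the particular solution $X^0_{ij}=2A_{ij}/(k_i+k_j)$ lying in $\mathfrak{n}_{\rho_0}$, and uniqueness by intersecting the affine solution set with $\mathfrak{n}_{\rho_0}$ (equivalently, modding out the homogeneous solutions, which live in $\mathfrak{h}_{\rho_0}$). You merely spell out the degenerate-eigenvalue bookkeeping that the paper defers to the explicit computation in its reference [ES2].
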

\begin{proof}
The proof of this follows by considering the vector space $V_0$ of solutions of the associated homogeneous equation:
\[
V_0=\{X_H\in i\mathfrak{u}(n)\ \big|\ \{X_H,\rho_0\}=0\}
\] 
and showing, by direct computation (see \cite{ES2}), that there exists a matrix $L_A\in\mathfrak{n}_{\rho_0}$ such that
\[
(L_A)_{ij}=\frac{2}{k_i+k_j}(A)_{ij}
\]
which yields a particular solution. 

It is easy to check that we get a single class of solutions by modding out the space of homogeous solutions, and this is equivalent to intersecting the space of solutions with the normal complement $\mathfrak{n}_{\rho_0}$, namely:
\[
\mathsf{Log}_{\rho_0}(A)\cap \mathfrak{n}_{\rho_0}\simeq\qsp{\mathsf{Log}_{\rho_0}(A)}{V_0}=\{[L_A]\}.
\]
The solution is unique for all $A$'s in $\mathfrak{n}_{\rho_0}$ and we denote it by $\mathsf{L}(A)\in\mathfrak{n}_{\rho_0}$. 
Therefore, we have a well-defined linear automorphism 
\[
\mathsf{L}: \mathfrak{n}_{\rho_0}\longrightarrow \mathfrak{n}_{\rho_0}\ ;\  A\longmapsto \mathsf{L}(A)=L_A.
\]
\end{proof}

We can extend this result and use it to define a vector bundle morphism over the orbit of any mixed state,  as we did for the map $D$, in \eqref{Dmap}:

\begin{lemma}\label{lemma:vbmor}
Given any co-adjoint orbit $\orb_{\rho_0}$ of the Unitary group for a mixed state $\rho_0\in\mathfrak{t}\subset i\mathfrak{u}(n)$, there exists a vector bundle morphism covering the identity
\begin{equation}
\xymatrix{
T\orb_{\rho_0}\otimes\mathfrak{n}_{\rho_0}\ar@{->}^{\mathbb{L}}[r] \ar@{->>}[d]& T\orb_{\rho_0}\otimes\mathfrak{n}_{\rho_0} \ar@{->>}[d] \\
\orb_{\rho_0}\ar@{->}^{\mathrm{Id}}[r]& \orb_{\rho_0}.
}
\end{equation}
Moreover, $\mathbbm{L}$ is symmetric.

\end{lemma}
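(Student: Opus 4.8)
The plan is to mimic the construction of $\mathbb{D}$ in Eq.~\eqref{Dmap}: we have already produced, in the previous lemma, a well-defined linear automorphism $\mathsf{L}\colon\mathfrak{n}_{\rho_0}\to\mathfrak{n}_{\rho_0}$ attached to the reference point $\rho_0\in\mathfrak t$, sending $A$ to the unique element of $\mathfrak n_{\rho_0}$ solving $A=\tfrac12\{X,\rho_0\}$, explicitly $(\mathsf L(A))_{ij}=\tfrac{2}{k_i+k_j}(A)_{ij}$. First I would set $\mathbb L:=\mathrm{id}\otimes\mathsf L$ on the trivial bundle $\orb_{\rho_0}\times\mathfrak n_{\rho_0}$ over the reference point, exactly as $\mathbb D=\mathrm{id}\otimes(-D)$. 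Then I would transport it equivariantly to every other point of the orbit: above $\rho=\Ad_U\rho_0$ define $\mathbb L_\rho=\mathrm{id}_\rho\otimes(\Ad_U\circ\mathsf L\circ\Ad_{U^{-1}})$, and check well-definedness — i.e.\ independence of the choice of $U$ with $\rho=\Ad_U\rho_0$ — using that any ambiguity $U'=Uh$ with $h\in H_{\rho_0}$ changes the conjugation by an inner automorphism that commutes with $\mathsf L$, precisely because $\mathsf L$ is built entirely from the eigenvalues $k_i$ and hence intertwines the $H_{\rho_0}$-action on $\mathfrak n_{\rho_0}$. This is the same remark already invoked for $\mathbb D$, so I would state it in parallel and refer back.

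Next I would verify that $\mathbb L$ is a smooth vector bundle morphism covering the identity. Smoothness follows from equivariance together with smoothness at the single point $\rho_0$ (where $\mathsf L$ is a fixed linear map) and the local-triviality of the principal bundle $H_{\rho_0}\to U(n)\to\orb_{\rho_0}$: on a local section $\sigma\colon\mathcal U\to U(n)$ the morphism reads $\Ad_{\sigma}\circ\mathsf L\circ\Ad_{\sigma^{-1}}$, a smooth family of linear maps. Covering the identity and respecting the bundle projection are immediate from the $\mathrm{id}\otimes(-)$ shape.

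For the final assertion, that $\mathbb L$ is symmetric, I would run the same two-line computation as in Lemma~\ref{skewD} but now with the plus sign: with the Killing-type pairing $(,)$ on $\mathfrak u(n)$,
\begin{align*}
(A,\mathsf L(B))=\Tr{A_{ij}\,(\mathsf L(B))_{ji}}=\sum_{ij}\frac{2}{k_i+k_j}A_{ij}B_{ji}
\end{align*}
which is manifestly invariant under swapping $A\leftrightarrow B$ (the scalar $\tfrac{2}{k_i+k_j}$ is symmetric under $i\leftrightarrow j$), hence $(A,\mathsf L(B))=(\mathsf L(A),B)$, i.e.\ $\mathsf L$ is self-adjoint. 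Tensoring with the identity on $T\orb_{\rho_0}$ (symmetric with respect to any chosen inner product there) and transporting by the $\Ad$-invariance of $(,)$ to arbitrary $\rho$ gives symmetry of $\mathbb L$ fibrewise, hence as a bundle morphism. (The contrast with $\mathbb D$ — where $k_i-k_j$ is antisymmetric, forcing the sign flip in $\mathbb D=\mathrm{id}\otimes(-D)$ — is worth noting, and explains why no sign twist is needed here.)

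The only genuinely delicate point is the well-definedness of the equivariant extension, i.e.\ that $\mathsf L$ commutes with the stabiliser action; everything else is bookkeeping parallel to the $\mathbb D$-construction. If $\rho_0$ has repeated eigenvalues this is slightly more than a formality — one must see that $\mathsf L$, defined only on $\mathfrak n_{\rho_0}$ (the off-block-diagonal part), is genuinely $H_{\rho_0}$-equivariant; but since $\tfrac{2}{k_i+k_j}$ depends only on the values $k_i$, which are constant on each block, the scaling operator is block-diagonal in the obvious sense and commutes with $\Ad_h$ for $h\in H_{\rho_0}$, so the argument goes through verbatim.
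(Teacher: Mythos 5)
Your proposal is correct and follows essentially the same route as the paper: extend $\mathsf{L}$ fibrewise as $\mathbb{L}=\mathrm{Id}\otimes\mathsf{L}$, transport it equivariantly via $\Ad_U\circ\mathsf{L}\circ\Ad_{U^{-1}}$, and deduce symmetry from the trace computation with the symmetric factor $\tfrac{2}{k_i+k_j}$, in parallel with Lemma~\ref{skewD}. Your explicit checks of well-definedness under the choice of $U$ and of smoothness are points the paper leaves implicit, and you handle them correctly.
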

\begin{proof}
Consider the vector bundle
\[\begin{array}{c}
T\orb_{\rho_0}\otimes\mathfrak{n}_{\rho_0}\\\downarrow\\\orb_{\rho_0}\end{array}
\]
and extend the above automorphism $\mathsf{L}$ to a fibrewise linear action:
\begin{equation}
\mathbb{L}:\begin{array}{c}
T\orb_{\rho_0}\otimes\mathfrak{n}_{\rho_0}\longrightarrow T\orb_{\rho_0}\otimes\mathfrak{n}_{\rho_0}\\
(\rho,v_\rho)\otimes A\longmapsto(\rho,v_\rho)\otimes\mathsf{L}(A).
\end{array}
\end{equation}
Then we have $\mathbb{L}:=\mathrm{Id}\otimes\mathsf{L}$, and we understand the retraction to the reference point through $\Ad_U\circ L\circ Ad_{U^{-1}}:\mathfrak{n}_\rho\rightarrow\mathfrak{n}_\rho$ with $\rho=U\rho_0 U^\dag$.

Using the same computation we used in Lemma \ref{skewD}, we can show that for all $A,B\in\mathfrak{n}_{\rho_0}$, one has $\Tr{AL(B)}=\Tr{L(A)B}$, and extending to any inner product on $T\orb_{\rho_0}\otimes\mathfrak{n}_{\rho_0}$, we get the symmetry of $\mathbb{L}$ as a bundle map.
\end{proof}

We are now ready to prove:

\begin{proposition}[Proposition-Definition]\label{PropSLD}
Let $\rho_0$ represent a mixed state. On the co-adjoint orbit $\orb_{\rho_0}$ of the unitary group $U(n)$, equation \eqref{SLD}:
\[
d\rho=\frac12 \{\dl\rho,\rho\}
\]
has a unique solution $\dl\rho\in\left(\Omega^1(\orb_\rho)\otimes\mathfrak{n}_\rho\right)^{U(n)}$. Such a solution will be called the \ev{symmetric logarithmic differential} (SLD) associated with the mixed state $\rho_0$. Moreover $\dl\rho=\mathbb{L}^*d\rho$.
\end{proposition}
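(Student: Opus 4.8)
The plan is to reduce the global statement to a pointwise linear-algebra problem at the reference point $\rho_0$, solve it with the preceding Lemma, and then propagate the solution over the whole orbit by equivariance.

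First I would use the equivariance already established above, namely $d\rho=\Ad_U d\rho_0$ for $\rho=\Ad_U\rho_0$ and the analogous property forced on any GSLD, to reduce equation \eqref{SLD} to the corresponding equation at $\rho_0$. Evaluating on an arbitrary tangent vector $v_{\rho_0}\in T_{\rho_0}\orb_{\rho_0}$ and setting $A:=d\rho_0(v_{\rho_0})$ and $X:=\dl\rho_0(v_{\rho_0})$, equation \eqref{SLD} becomes precisely the matrix equation \eqref{matSLD}, $A=\frac12\{X,\rho_0\}$. The crucial observation here is that, by \eqref{natdiff}, $A=d\rho_0(v_{\rho_0})$ lies automatically in $\mathfrak{n}_{\rho_0}$, so the hypothesis of the Lemma preceding Lemma \ref{lemma:vbmor} is met for every tangent vector.

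Next I would invoke that Lemma: for $A\in\mathfrak{n}_{\rho_0}$ the solution set of \eqref{matSLD} is the affine space $L_A+V_0$, and the requirement that the symmetric logarithmic differential be valued in $\mathfrak{n}_\rho$, i.e. $X\in\mathfrak{n}_{\rho_0}$, isolates the single representative $\mathsf{L}(A)$. The point to spell out is that $V_0\cap\mathfrak{n}_{\rho_0}=\{0\}$: a nonzero entry $(ij)$ of an element of $\mathfrak{n}_{\rho_0}$ necessarily joins two distinct eigenvalue blocks of $\rho_0$, the block-diagonal part being exactly the stabiliser $\mathfrak{h}_{\rho_0}$, hence $k_i\neq k_j$; since $\rho_0$ has non-negative eigenvalues this forces $k_i+k_j>0$ on precisely those positions, so the homogeneous relation $(k_i+k_j)(X_H)_{ij}=0$ admits no nonzero solution inside $\mathfrak{n}_{\rho_0}$. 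This delivers uniqueness, and since by construction $\mathsf{L}(A)\in\mathfrak{n}_{\rho_0}$ and $\frac12\{\mathsf{L}(A),\rho_0\}=A$, it also delivers existence, with $\dl\rho_0=\mathsf{L}\circ d\rho_0$.

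Finally I would translate $\dl\rho_0=\mathsf{L}\circ d\rho_0$ into bundle-morphism language: since $\mathbb{L}=\mathrm{Id}\otimes\mathsf{L}$, post-composing the $\mathfrak{n}_{\rho_0}$-valued $1$-form $d\rho_0$ with $\mathsf{L}$ on the fibre factor is exactly $\mathbb{L}^*d\rho_0$, with the same convention used for $\mathbb{D}^*$ in Proposition \ref{intermedpull}. Because $\mathbb{L}$ is defined so as to intertwine the adjoint retraction $\Ad_U\circ\mathsf{L}\circ\Ad_{U^{-1}}$ to the reference point, the identity $\dl\rho=\mathbb{L}^*d\rho$ holds above every $\rho=\Ad_U\rho_0$, and the $1$-form so defined is $U(n)$-equivariant and solves \eqref{SLD} by construction, which closes the existence part and confirms that $\dl\rho=\mathbb{L}^*d\rho$. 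I expect the only genuinely delicate step to be the verification that restricting to $\mathfrak{n}_\rho$-values removes every homogeneous ambiguity, i.e. the bookkeeping relating the degeneracy pattern of $\rho_0$, the splitting $\mathfrak{g}=\mathfrak{h}_{\rho_0}\oplus\mathfrak{n}_{\rho_0}$, and the positivity constraint on the eigenvalues, together with keeping the two ``pullback'' conventions ($\mathbb{L}^*$ acting on the value of a Lie-algebra-valued form versus the de Rham pullback) consistent.
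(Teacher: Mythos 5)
Your proposal is correct and follows essentially the same route as the paper: reduce equation \eqref{SLD} by equivariance to the representative equation at $\rho_0$, apply the preceding lemma on the automorphism $\mathsf{L}$ of $\mathfrak{n}_{\rho_0}$ to get existence and uniqueness there, and then recognise the result as $\dl\rho=\mathbb{L}^*d\rho$ via the bundle morphism of Lemma \ref{lemma:vbmor}. The only difference is that you spell out explicitly the verification $V_0\cap\mathfrak{n}_{\rho_0}=\{0\}$ (via the positivity of $k_i+k_j$ on the off-block positions), which the paper delegates to the earlier lemma's assertion that intersecting with $\mathfrak{n}_{\rho_0}$ selects a single representative.
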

\begin{proof}
Notice that equation \eqref{matSLD} extends to an equivariant equation as
\begin{equation}
\alpha_\rho=\frac12\{\beta_\rho,\rho\}
\end{equation}
everytime $\alpha$ is an equivariant Lie-algebra valued 1-form over $\orb_{\rho_0}$. In particular, when $\alpha= \iota_*\in\left(\Omega^1(\orb_{\rho_0})\otimes\mathfrak{n}_{\rho_0}\right)^{U(n)}$ the equation takes the form
\begin{equation}
d\rho=\frac12\{\dl\rho,\rho\}.
\end{equation}

We know, from the equivariance property of $\dl(\rho),d\rho$ and $\rho$, that solutions to this equation can be recovered by solving the representative equation
\begin{equation}\label{repSLD}
d\rho_0=\frac12\{\dl\rho_0,\rho_0\}.
\end{equation}
As a matter of fact, equation \eqref{repSLD} is of the form \eqref{matSLD}, and has a unique solution $\dl\rho_0\in\mathfrak{n}_{\rho_0}$. In virtue of the above construction we can say that 
\[
\dl\rho=\mathbb{L}^*d\rho
\]
regarded as a $U(n)$-equivariant section of the vector valued cotangent bundle $T^*\orb_{\rho_0}\otimes\mathfrak{n}_{\rho_0}$.
\end{proof}

\section{Fisher tensor}\label{FishT}
First introduced in \cite{ES1,ES2}, to generalize the concept of quantum Fisher information metric, the Fisher tensor is a (0,2)-type tensor that uses the symmetric logarithmic derivative as a fundamental object. Using the construction above we can now provide a precise definition for it and analyse the implications.
\begin{definition}
Let $\orb_{\rho_0}$ be the orbit associated with a mixed state $\rho_0$. We define the \ev{Fisher tensor} $\mathfrak{F}$ to be a section of $\left(T^*\orb_{\rho_0}\right)^{\otimes 2}$ that locally reads
\begin{equation}
\mathfrak{F}_\rho=\Tr{\rho\,\dl\rho{\otimes}\dl\rho}.
\end{equation}
The Fisher tensor can be decomposed in its symmetric and antisymmetric part that will be denoted respectively by $\mathfrak{F}^{\odot}$ and $\mathfrak{F}^\wedge$. 
\end{definition}

We would like to have a closer look at its symmetric and antisymmetric parts and interpret them in terms of intrinsic quantities on the co-adjoint orbits of the unitary group.

\subsection{Antisymmetric part}
We can use the construction outlined above to show that 
\begin{theorem}\label{propantisymm}
The antisymmetric part of the Fisher Tensor $\mathfrak{F}^\wedge$ is a symplectic form.  In particular 
\begin{equation}
\mathfrak{F}^\wedge =\frac12 (\mathbb{DL})^*\Omega_{KKS}.
\end{equation}
\end{theorem}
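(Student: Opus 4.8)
First I would unpack the antisymmetric part: the antisymmetric part of $\mathfrak{F}_\rho=\Tr{\rho\,\dl\rho\otimes\dl\rho}$ is $\mathfrak{F}^\wedge=\Tr{\rho\,\dl\rho\wedge\dl\rho}$, which on tangent vectors $v,w\in T_\rho\orb_{\rho_0}$ gives $\frac{1}{2}\Tr{\rho[\dl\rho(v),\dl\rho(w)]}$ --- exactly the shape of the right-hand side of \eqref{KKSnew} with $\Phi^{-1}$ replaced by $\dl\rho$. By Proposition~\ref{PropSLD} we have $\dl\rho=\mathbb{L}^*d\rho$, and since the two-form $\Tr{\rho\,d\rho\wedge d\rho}$ is obtained from $d\rho$ by a purely fibrewise algebraic operation, the same operation applied to $\dl\rho=\mathbb{L}^*d\rho$ yields
\[
\mathfrak{F}^\wedge=\Tr{\rho\,(\mathbb{L}^*d\rho)\wedge(\mathbb{L}^*d\rho)}=\mathbb{L}^*\left(\Tr{\rho\,d\rho\wedge d\rho}\right)=\mathbb{L}^*\Omega_\orb .
\]
Inserting Proposition~\ref{intermedpull}, $\Omega_\orb=\frac{1}{2}\mathbb{D}^*\Omega_{KKS}$, and using functoriality $\mathbb{L}^*\circ\mathbb{D}^*=(\mathbb{DL})^*$ (with $\mathbb{DL}:=\mathbb{D}\circ\mathbb{L}$; since $D$ and $\mathsf{L}$ rescale the same root components of $\mathfrak{n}_{\rho_0}$ they commute, so the order is immaterial), one obtains $\mathfrak{F}^\wedge=\frac{1}{2}(\mathbb{DL})^*\Omega_{KKS}$. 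This part is just a transcription of Proposition~\ref{intermedpull} with $\mathbb{D}$ upgraded to $\mathbb{DL}$, and I expect no difficulty there.

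Non-degeneracy of $\mathfrak{F}^\wedge$ is then immediate: $\mathbb{D}$ is invertible by Lemma~\ref{skewD}, and $\mathbb{L}=\mathrm{Id}\otimes\mathsf{L}$ is invertible because $\mathsf{L}$ is a linear automorphism of $\mathfrak{n}_{\rho_0}$ (inverse $(A)_{ij}\mapsto\frac{k_i+k_j}{2}(A)_{ij}$), whose equivariant extension over the orbit is fibrewise an isomorphism; hence $\mathbb{DL}$ is a fibrewise linear isomorphism covering the identity, and pulling the non-degenerate form $\Omega_{KKS}$ back along it keeps it non-degenerate. (Incidentally, $\mathbb{D}$ being skew and $\mathbb{L}$ symmetric, and the two commuting, $\mathbb{DL}$ is itself skew for any invariant inner product.)

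The main obstacle is closedness, and here one cannot argue by naturality: $\mathbb{DL}$ is a vector-bundle endomorphism of $T\orb_{\rho_0}$ (through the identification $d\rho$), not the tangent lift of a diffeomorphism of $\orb_{\rho_0}$, so $d$ need not commute with $(\mathbb{DL})^*$. In fact the intermediate two-form $\Omega_\orb=\Tr{\rho\,d\rho\wedge d\rho}$ of Proposition~\ref{intermedpull} is \emph{not} closed in general: since $d(d\rho)=0$ one gets $d\Omega_\orb=\Tr{d\rho\wedge d\rho\wedge d\rho}$, a 3-form whose value on $u,v,w$ is proportional to $\Tr{d\rho(u)\,[d\rho(v),d\rho(w)]}$, and this is nonzero already for $n=3$. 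So closedness of $\mathfrak{F}^\wedge$ must be checked directly. The plan would be to expand $d\mathfrak{F}^\wedge=\Tr{d\rho\wedge\dl\rho\wedge\dl\rho}+\Tr{\rho\,(d\dl\rho\wedge\dl\rho-\dl\rho\wedge d\dl\rho)}$, substitute the defining equation $d\rho=\frac{1}{2}\{\dl\rho,\rho\}$ of \eqref{SLD} and the relation for $d\dl\rho$ obtained by differentiating it (equivalently, by differentiating $\dl\rho=\mathbb{L}^*d\rho$ and tracking the $U(n)$-equivariant variation of $\mathbb{L}$), together with $d(d\rho)=0$, and then use equivariance to reduce the computation to the diagonal representative $\rho_0$, where the explicit action of $D$ and $\mathsf{L}$ on root spaces should make the required cancellation transparent. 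A cleaner alternative would be to produce a compatible, integrable complex structure on $\orb_{\rho_0}$ intertwined by $\mathbb{DL}$, exhibiting $(\orb_{\rho_0},\mathfrak{F}^\wedge)$ as K\"ahler and making closedness automatic. Carrying out this cancellation (or constructing the K\"ahler package) is where the real work lies.
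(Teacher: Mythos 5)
The algebraic half of your argument --- $\mathfrak{F}^\wedge=\mathbb{L}^*\Omega_\orb=\frac12(\mathbb{DL})^*\Omega_{KKS}$ via Propositions \ref{PropSLD} and \ref{intermedpull}, and non-degeneracy from $\mathbb{DL}$ being a fibrewise linear automorphism --- is exactly the paper's proof and is fine. Where you diverge is closedness: the paper disposes of it in one line, asserting that ``$d$ and $(\mathbb{DL})^*$ commute,'' which is precisely the step you decline to take for granted. Your objection is well founded: $\mathbb{DL}$ is a fibrewise endomorphism covering the identity, not the tangent map of a diffeomorphism of $\orb_{\rho_0}$, so no naturality argument is available, and your side remark that already $\Omega_\orb=\Tr{\rho\, d\rho\wedge d\rho}$ fails to be closed (its differential being $\Tr{d\rho\wedge d\rho\wedge d\rho}$) is also correct.

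However, the completion you propose --- expanding $d\mathfrak{F}^\wedge$ and hunting for a cancellation, or producing a compatible K\"ahler structure --- cannot succeed, because the form is genuinely not closed on generic orbits. Since $\mathfrak{F}^\wedge$ is $U(n)$-invariant, it suffices to test it in the complex of invariant forms. On $\mathcal{F}^3=U(3)/T^3$ every invariant $2$-form is $\sum_{i<j}c_{ij}\,\theta_{ij}\wedge\theta_{ji}$ with $\theta=U^\dag dU$, and the Maurer--Cartan equation shows such a form is closed if and only if $c_{13}=c_{12}+c_{23}$, i.e.\ iff $c_{ij}=x_i-x_j$ for some $x$, so that the closed invariant $2$-forms are exactly the KKS forms of the various $\xi\in\mathfrak t$ (consistent with $H^2(\mathcal{F}^3;\R)\simeq\R^2$). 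The KKS coefficients $c_{ij}=k_i-k_j$ satisfy this cocycle condition, but $\mathfrak{F}^\wedge$ has $c_{ij}\propto (k_i-k_j)^3/(k_i+k_j)^2$ (read off from $(\dl\rho_0(v))_{ij}=\frac{2(k_i-k_j)}{k_i+k_j}(K_v)_{ij}$, or from the antisymmetric part of \eqref{Fish3,3}), and these violate it for generic eigenvalues, e.g.\ $(k_1,k_2,k_3)=(1/2,1/3,1/6)$ gives $c_{12}+c_{23}=1/150+1/54\neq 1/12=c_{13}$. (The same test shows $\Omega_\orb$, with coefficients $(k_i-k_j)^3$, is not closed, confirming your remark; and the constant coefficients in \eqref{Fish3,3} are deceptive, since the $dz_i$ are chart differentials evaluated at the reference point and carry no information about $d\mathfrak{F}^\wedge$.) So the missing step in your proposal is not fillable: $\mathfrak{F}^\wedge$ is an invariant, non-degenerate, but generally non-closed $2$-form, and the gap you correctly isolated lies in the theorem and the paper's proof, not merely in your write-up.
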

\begin{proof}
Using Proposition \ref{PropSLD} and Lemma \ref{lemma:vbmor} we know that $\dl\rho=\mathbb{L}^*d\rho$, and thus
\begin{equation}
\mathfrak{F}^\wedge=\mathbb{L}^*\Omega_\orb
\end{equation}
where $\Omega_\orb$ is given by $\Omega_\orb\big|_\rho=\Tr{\rho\,d\rho\wedge d\rho}$. Then, applying Proposition \ref{intermedpull} we get
\[
\mathfrak{F}^\wedge =\frac12 \mathbb{L}^* \mathbb{D}^* \Omega_{KKS}=\frac12 (\mathbb{DL})^*\Omega_{KKS}.
\]

This implies that $\mathfrak{F}^\wedge$ is closed, for $d$ and $(\mathbb{DL})^*$ commute, and the nondegeneracy follows from $\mathbb{DL}$ being a fibrewise linear automorphism.
\end{proof}

This means that the antisymmetric part of the Fisher tensor is \emph{essentially} the KKS form, and that we can always reduce the computation of $\mathfrak{F}^\wedge$ to $\Omega_{KKS}$ as we outlined. Explicitly, using the notation introduced in the previous section, we have:
\begin{equation}
\mathfrak{F}^\wedge(v_\rho,w_\rho)|_\rho= (\mathbb{DL})^*\Tr{\rho \left(\Phi^{-1}\wedge\Phi^{-1}\right)(v_\rho,w_\rho)\}}=\frac12\Tr{\rho [L\circ D(K_v),L\circ D(K_w)]}.
\end{equation}

\begin{remark}
Following \cite{ercolessi} we know that the KKS symplectic form is strongly related to the (geometric) Berry phase, and \emph{a fortiori} we can conclude that the Fisher tensor is related with the curvature of the Berry phase connection as well.

This is a nontrivial statement for several reasons. First of all, because a symplectic form on a $(d>2)$-dimensional co-adjoint orbit need not be related to the KKS symplectic form. This fact depends on the second cohomology groups of the co-adjoint orbits of $U(n)$, which are not trivial, and are generally of a dimension higher than $1$, the case of $\mathbb{CP}^n$.

Furthermore this is telling us that the Fisher tensor is a non-trivial intrinsic object that contains relevant information on the space of states for generic, mixed quantum states, and that it should be highly regarded among other similar information indices.
\end{remark}

\subsection{Symmetric part and metrics}
Following what we have done for $\mathfrak{F}^\wedge$, we can regard $\mathfrak{F}^\odot$ as the pullback of an object similar to the KKS symplectic form, namely $\mathfrak{F}^\odot=(\mathbb{DL})^*G$ with
\begin{equation}
G(v_\rho,w_\rho)|_\rho=\Tr{\rho \left(\Phi^{-1}\odot\Phi^{-1}\right)(v_\rho,w_\rho)\}}=\frac12\Tr{\rho \{K_v,K_w\}}
\end{equation}
where $\{K_v,K_w\}$ is the anti-commutator of matrices in $\mathfrak{n}_{\rho_0}$ and $\odot$ is the symmetric tensor product in $T^*\orb_{\rho_0}$.

Nonetheless, as already pointed out in \cite{ES1}, this is not equivalent to the compatible metric that can be computed from the symplectic structure $\omega_{KKS}$ and the complex structure $J$ on co-adjoint orbits via the formula
\[
G_{KKS}(\cdot,\cdot)=\Omega_{KKS}(\cdot, J\cdot)
\]
making it a K\"ahler manifold. It is known \cite{grab} that this compatible triple $(G_{KKS},\Omega_{KKS},J)$ can be obtained via reduction on each orbit of the corresponding Poisson and Riemannian structures defined by the Hermitian product of the original Hilbert space of states. These non equivalent metrics occur already in the two dimensional case of generic orbits in $\mathfrak{u}(2)$, where the complex structure is given simply by the map $\sigma_1\rightarrow \sigma_2$, $\sigma_2\rightarrow - \sigma_1$. However, in the projective (pure state) cases everything reduces, up to a scaling factor, to the Fubini-Study metric, and we cannot distinguish anymore $G_{KKS}$ from $\mathfrak{F}^\odot$ (up to scale). This can be interpreted as saying that in flag manifolds that are more general than $\mathbb{CP}^n$ we have at least two inequivalent ways of generalising the Fubini-Study metric. One is given by the natural KKS form and its compatible metric (provided that we are given an explicit complex structure), whereas the other one is given by the quantum Fisher metric, i.e. the symmetric part of the Fisher tensor.

Observe that the Fisher information index, of central importance in quantum metrology and quantum information theory, is given by the restriction of the symmetric part of the Fisher tensor $\mathfrak{F}^\odot$ on the curve (resp. surface) $\rho(\theta)$, where $\theta$ is some parameter of interest (or set of parameters) \cite{ES1,ES2}. This suggests that we understand better the role of the Fisher tensor and its symmetric part, which is also related to the Bures metric $g_B$ (cf. \cite{akh} and references therein) via
\begin{equation}
g_B:=\frac14\Tr{d\rho\dl\rho}=\frac18\Tr{\rho\dl\rho\odot\dl\rho}=\frac14\mathfrak{F}^\odot
\end{equation}
where we used the defining equation \eqref{SLD} for $\dl\rho$.

In \cite{akh}, the author used a similar approach to the one presented here to compute the Bures metric on the same coset spaces and even allowing the eigenvalues to change (a similar analysis, for different purposes was carried out in \cite[Sections 2.2, 2.3]{ES1}). His result matches with our language in that the tangent part of the Bures metric (i.e. tangent to the orbit), and hence of the Fisher metric is obtained as
\begin{equation}\label{fishbures}
4g_B=\mathfrak{F}^\odot=(\mathbb{D}\mathbb{L})^*\Tr{\rho(\Phi^{-1}\odot\Phi^{-1})}.
\end{equation}  

As a matter of fact, in  \cite{akh}, Eq.ns 16-19, it is shown how to compute the Bures/Fisher metric also when we allow the eigenvalues of $\rho$ to change, still remaining in a single stratum. In this case, one has:
\begin{equation}\label{fishbures2}
g_B=\sum_i\frac{1}{4}\left(\frac{dk_i}{k_i}\right)^2 + \sum_{i<j}\Lambda_{ij}|(U^\dag dU)_{ij}|^2
\end{equation}  
with $\Lambda_{ij}=\frac{(k_i-k_j)^2}{k_i+k_j}$ and $U^\dag dU$ is the Maurer-Cartan form, with values in the Lie algebra. The first term of (\ref{fishbures2}) is clearly the standard logarithmic derivatives of the abelian part of $\rho$ (i.e. of its eigenvalues) while, from the identity $d\rho_0 = [U^\dag dU,\rho_0]$, we understand that the image of $U^\dag dU$ agrees with $\Phi^{-1}$ in $\mathfrak{n}_{\rho_0}$. It is then a matter of a simple calculation to show that \eqref{fishbures} holds.

Another similar analysis of the metric one can endow the space of mixed states with is carried out in \cite{DCB}, where the construction of the information metric goes through the square root map on pure states. There, it is shown that there exists some duality between the space of pure states (rays in a Hilbert space) and the space of density matrices, given by the computation of a probability distribution from the joint data of a ray and a density matrix. This duality is then exploited to prove important results in quantum metrology.

In our framework we can understand this duality as encoding the identification of $\mathfrak{u}(n)$ and its dual, in that every ray $x$ in a Hilbert space is naturally associated with a density matrix by assigning its projector $P_x$. The probability distribution then arises by taking the trace of any density matrix $\rho$ against the projector.

As an explicit non-trivial example, one can look at the case of $U(3)$.  In \cite{ES2} the symmetric logarithmic derivative and the Fisher tensor for a generic mixed state with $\rho_0 = diag(k_1,k_2,k_3)$, and $k_i\neq k_j$, was computed explicitly. This corresponds to the orbit $U(3)/U(1)^3$. Using local charts for which a generic element of $U(3)$ can be written as the $3\times 3$ matrix:
\[ \exp i \left( \begin{array}{ccc} \lambda_1 & z_1 & z_2 \\ z_1^* &  \lambda_2 & z_3 \\ z_2^* & z_3^* &\lambda_3 \end{array} \right) \]
with $\lambda_j \in \mathbb R$ and $z_j \in \mathbb C$, the Fisher tensor reads:
\begin{equation}\begin{aligned}\label{Fish3,3}
\mathfrak{F}_{U(3)}=4\frac{(k_1-k_2)^2}{(k_1+k_2)^2}&\{(k_1+k_2)dz_1\odot dz_1^* -i(k_1-k_2)dz_1\wedge dz_1^*\}\\
+4\frac{(k_1-k_3)^2}{(k_1+k_3)^2}&\{(k_1+k_3)dz_2\odot dz_2^* -i(k_1-k_3)dz_2\wedge dz_2^*\}\\
+4\frac{(k_2-k_3)^2}{(k_2+k_3)^2}&\{(k_2+k_3)dz_3\odot dz_3^* -i(k_2-k_3)dz_3\wedge dz_3^*\}\\
\end{aligned}\end{equation}
where we have defined the \emph{complex} (anti)-symmetrised tensor products as $dz_i\odot dz_i^*=\frac{1}{2}(dz_i\otimes dz_i^* + dz_i^*\otimes dz_i)$ and $dz_i\wedge dz_i^*=\frac{1}{2i}(dz_i\otimes dz_i^* - dz_i^*\otimes dz_i)$.

Notice that, from the above expression, it can be clearly seen that the correct $U(2)$-like expression is recovered when approaching degenerate cases having, e.g., $k_3 =0$ or $k_1=k_2$.

It is a matter of straightforward computations to check that the antisymmetric part of \eqref{Fish3,3} coincides with the pullback of the KKS symplectic form. Observe, furthermore, that the coordinate chart we used agrees with those of \cite{grab}, Section 5.

\section{Outlook and further research} 
In this paper we have reviewed the relevant basics of the orbit theory and how they can be used to give a consistent formulation of finite dimensional quantum mechanics in the density matrix formalism. This turned out to be fundamental in the interpretation of both the symmetric logarithmic derivative (or differential) and the quantum Fisher information index (tensor) from a geometric point of view.

The Fisher tensor defines a symplectic form on co-adjoint orbits which is the pullback of the natural KKS form. Although in the simple case of n-dimensional pure states and two dimensional mixed states it turns out that $\mathfrak{F}^\wedge \propto \Omega_{KKS}$ \cite{ES1} (as one would expect from the fact \cite{God} that $H^2(\mathbb{CP}^n,\R)\simeq\R$), in all other cases this need not be true in general. In the case of flag manifolds, in fact, one has \cite{Reeder} that $H^2(\mathcal{F}^n,\R)\simeq \mathfrak{t}_{n-1}^*\simeq \R^{n-1}$. It is a nontrivial fact that $\mathfrak{F}^\wedge$ and $\Omega_{KKS}$ are related to one another via the vector bundle morphism $\mathbb{DL}$. This opens up several research directions one can explore, from the compatibillity of the Fisher tensor with the fibration of mixed states, to the relation to geometric quantization and K\"ahler geometry.

It is interesting to understand how the Fisher Tensor behaves with respect to the fibration of the orbits associated with the respective mixed states. We can consider, for instance, the nesting
\begin{equation}
\xymatrix{
\mathbbm{CP}^{1}\simeq\mathcal{F}^{2}\ar@{^{(}->}[r]&\mathcal{F}^{3}\ar@{->>}[d]&\dots &\mathcal{F}^{n-1}\ar@{^{(}->}[r] \ar@{->>}[d]  & \mathcal{F}^n \ar@{->>}[d]\\
&\mathbbm{CP}^{2}&\dots&\mathbbm{CP}^{n-2} & \mathbbm{CP}^{n-1}
}
\end{equation}
and we can add a layer to each step:
\begin{equation}
\xymatrix{
T\mathcal{F}^n\otimes\mathfrak{n}_{\rho_0} \ar@{->}[rr]_{\mathbb{L}_{\rho_0}}\ar@{->>}[rd] \ar@{->}[dd]_{\hat{\pi}_*} & &T\mathcal{F}^n\otimes\mathfrak{n}_{\rho_0} \ar@{->>}[ld] \ar@{->}[dd]^{\hat{\pi}_*}\\
&\mathcal{F}^n \ar@{->>}[dd] &\\
T\mathbb{CP}^{n-1}\otimes\mathfrak{n}_{\xi_0} \ar@{->}[rr]|\hole_{\phantom{\ \ \ \ \ \ }\mathbb{L}_{\xi_0}}\ar@{->>}[rd]& & T\mathbb{CP}^{n-1}\otimes\mathfrak{n}_{\xi_0}\ar@{->>}[ld]\\
& \mathbb{CP}^{n-1} &
}
\end{equation}
where $\rho_0$ and $\xi_0$ are the reference points generating respectively the orbits $\mathcal{F}^n$ and $\mathbb{CP}^{n-1}$. The maps $\hat{\pi}_*$ are given by the tangent maps $\pi_*$ tensored with the natural projection coming from the vector space inclusion $\mathfrak{n}_{\xi_0}\rightarrow \mathfrak{n}_{\rho_0}$. Then, in the dual picture for equivariant sections we have:
\begin{equation}
\xymatrix{
\left(\Omega^1(\mathcal{F}^n)\otimes\mathfrak{n}_{\rho_0}\right)^{U(n)}   & & \ar@{->}[ll]_{\mathbb{L}^*_{\rho_0}}\left(\Omega^1(\mathcal{F}^n)\otimes\mathfrak{n}_{\rho_0}\right)^{U(n)}  \\
& \ar@{->}[lu]^{\dl\rho} \mathcal{F}^n \ar@{->>}[dd] \ar@{->}[ru]_{d\rho}&\\
\left(\Omega^1(\mathbb{CP}^{n-1})\otimes\mathfrak{n}_{\xi_0}\right)^{U(n)} \ar@{->}[uu]_{\hat{\pi}^*}  & & \ar@{->}[ll]|\hole^{\phantom{\ \ \ \ \ \ \ }\mathbb{L}^*_{\xi_0}} \left(\Omega^1(\mathbb{CP}^{n-1})\otimes\mathfrak{n}_{\xi_0}\right)^{U(n)} \ar@{->}[uu]_{\hat{\pi}^*} \\
&  \ar@{->}[lu]^{\dl\xi} \mathbb{CP}^{n-1} \ar@{->}[ru]_{d\xi}&
}
\end{equation}
where $d\rho\equiv\iota_*\big|_{\rho}$ and similarly for $d\xi$. 

As expounded in \cite{Guill}, this fibration comes naturally endowed with a symplectic connection and the KKS form is naturally associated with this symplectic fibration. What happens to the symplectic connection when one looks at the symplectic form defined by $\mathfrak{F}^\wedge$ is to be understood. 

On the other hand, it has been shown in \cite{Skerrit} that co-adjoint orbits admit a natural polarization in their complexified tangent bundle, once again related to the natural KKS symplectic form. The question whether our construction provides a new (possibly inequivalent) polarization, arises naturally offering possibly deep insights on the geometric quantisation of generic co-adjoint orbits of the unitary group.

All of these questions are ultimately related to the K\"ahler geometry of co-adjoint orbits \cite{Kir, Picken}. It is not clear yet whether the Fisher tensor gives also a (possibly inequivalent) K\"ahler structure on the spaces of mixed states. Moreover, it would be interesting to understand how the different K\"ahler structures on the orbits in a fibration are related.

Another possibly unrelated area of further investigation is how one could unify this approach with the one presented in \cite{Christandl} for entanglement classes. We strongly believe that the two points of view can be related to one another, once the appropriate modifications (like passing from $(S)U(n)$ to $SL(n,\mathbb{C})$) are taken into account.

All these questions will be subject of further research by the authors.

\newpage
\appendix
\section{Co-Adjoint orbits}\label{App:coad}
In this appendix we will give the basic theoretical notions about co-adjoint orbits of Lie groups. An extensive treatment of the subject can be found in \cite{Kir}, and we will refer to this for all the proofs.

Let us consider a Lie group $G$ and its Lie algebra $\mathfrak{g}$, that is its tangent space $\T{G}{e}$ to the identity $e\in G$. The internal operation in $\mathfrak{g}$ is given by the Lie bracket $[\cdot,\cdot]$ coming from the commutator bracket among left-invariant vector fields evaluated at the identity. As it is known, $G$ acts on itself transitively by left or right multiplication, that is to say, there exist two diffeomorphisms $G\longrightarrow G$:
\begin{equation}
\begin{aligned}
L_g:\ h&\longmapsto L_g h=gh\\
R_{g}:\ h&\longmapsto R_{g}h=hg^{-1}
\end{aligned}
\end{equation}
from which we may construct an automorphism of the group, called conjugation as
\begin{equation}
\mathsf{Cj}_g=L_g\circ R_{g}:\ h\longmapsto ghg^{-1}
\end{equation}
which leaves the identity fixed. In virtue of this property we understand that $\mathsf{Cj}$ naturally defines a linear map $\mathfrak{g}\longrightarrow\mathfrak{g}$:
\begin{equation}
\label{adjointrep}
\Ad_g:\ (\mathsf{Cj}_{g})_*|_e.
\end{equation}

\begin{definition}
Notice that the map $g\longmapsto\Ad_g$ defines a group homomorphism between the group G and the group of automorphisms $\mathsf{Aut}\,\mathfrak{g}$ of the Lie algebra, regarded as a vector space. 

This mapping is called \ev{adjoint representation} of the Lie group G on $\mathfrak{g}$. Since this representation is smooth, we may differentiate it and obtain the mapping 
\begin{equation}
\Ad_{g*}(e)\equiv\ad:\begin{array}{rcl}
\mathfrak{g}&\longrightarrow&\mathsf{End}\,\mathfrak{g}\\
X&\longmapsto&\mathsf{ad}_X
\end{array}
\end{equation}
such that $\ad_XY=[X,Y]$ for $X,Y\in\mathfrak{g}$.
\end{definition}

Similarly, we may now consider $\mathfrak{g}^*$, the dual of $\mathfrak{g}$, and we have the following

\begin{definition}
We define a linear representation of $G$ on $\mathfrak{g}^*$ by means of the adjoint representation as the mapping $g\longmapsto\Cad_g$ via the relation $\Cad_g=(\Ad_{g})^\vee$, where the symbol ${}^\vee$ denotes the dual operator to $\Ad_{g}$ and its action is specified through the pairing:
\begin{equation}
<\Cad_g\xi,X>:=<(\Ad_{g})^\vee\xi,X>=<\xi,\Ad_{g^{-1}}X>.
\end{equation}

This representation is called \ev{co-adjoint representation} and it is linear, therefore it can be derived on the identity to yield a map
\begin{equation}
(\Cad_{g})_*(e):
\begin{array}{rcl}
\mathfrak{g} & \longrightarrow & \mathsf{End}\,\mathfrak{g}^*\\
X & \longmapsto & \ad_X^*
\end{array}
\end{equation}
such that $\ad_X^*$ acts as the adjoint of $\ad_X$, that is to say
\begin{equation}
<\ad_X^*\xi,Y>=<\xi,-\ad_XY>=<\xi,-[X,Y]>
\end{equation}
for every $X,Y\in\mathfrak{g}$ and $\xi\in\mathfrak{g}^*$.
\end{definition}

Let us consider now a distinguished point $\xi_\bullet\in\mathfrak{g}^*$. The bullet denotes the representative role of $\xi_\bullet$ in considering the class of all linear functionals in $\mathfrak{g}^*$ obtained by acting on $\xi_\bullet$ through the co-adjoint-representation of the group $G$. As a matter of fact we are dealing with the orbit $\orb_{\xi_\bullet}\subset\mathfrak{g}^*$, $\orb_{\xi_\bullet}=\{\Cad_g\,\xi_\bullet\ |\ g\in G\}$. 

Consider the subgroup $H_{\xi_\bullet}$of $G$ that leaves $\xi_\bullet$ fixed, that is to say $H_{\xi_\bullet}=\{h\in G\ |\ \Cad_{h}\xi_\bullet=\xi_\bullet\}$. This subgroup is called \ev{stabiliser} subgroup of (the point) $\xi_\bullet$. We have then
\begin{equation}\label{Pbdl}
\orb_{\xi_\bullet}\simeq \qsp{G}{H_{\xi_\bullet}}.
\end{equation}
We denote $\mathfrak{h}_{\xi_\bullet}$ the Lie algebra of the stabiliser of $\xi_\bullet$. 

The action of $G$ on $\orb_{\xi_\bullet}\subset\mathfrak{g}^*$ is transitive and therefore it is possible to view the group $G$ as a fibre bundle $H_{\xi_\bullet}\rightarrow G \rightarrow \orb_{\xi_\bullet}$, where the natural projection map is given by
\begin{equation}
\label{cadfibration}
\pi:\begin{array}{rcl}
G&\longrightarrow&\orb_{\xi_\bullet}\\
g&\longmapsto&\Cad_g\xi_\bullet
\end{array}
\end{equation} 
with fibre above $\xi_\bullet$ isomorphic to $H_{\xi_\bullet}$. By taking the tangent maps we obtain the exact sequence:
\begin{equation}
\label{sequence}
0\rightarrow\mathfrak{h}_{\xi_\bullet}\stackrel{\iota}{\hookrightarrow}\mathfrak{g}\stackrel{d\pi}{\rightarrow}\T{\orb_{\xi_\bullet}}{\xi_\bullet}\rightarrow0
\end{equation}
where the map from $\mathfrak{g}$ to $\T{\orb_{\xi_\bullet}}{\xi_\bullet}$ is given by the derivative $d\pi \equiv \cad$ of the projection and $\iota\colon\ \mathfrak{h}_{\xi_\bullet}\hookrightarrow\mathfrak{g}$ is the inclusion of the stabilizing sub-algebra into $\mathfrak{g}$.  Note that this means that $\mathrm{ker}(\cad \xi_\bullet)=\mathfrak{h}_{\xi_\bullet}$.

\begin{definition}
Let us consider the bilinear form on $\mathfrak{g}$
\begin{equation}
\label{KKS}
\beta_{\xi_\bullet}(X,Y)=\left<\xi_\bullet,[X,Y]\right>=-\left<\cad_X\xi_\bullet,Y\right>.
\end{equation}
Any time $\orb_{\xi_\bullet}$ is a co-adjoint orbit of G in $\mathfrak{g}^*$, there exists a 2-form $\omega_{\xi_\bullet}$ acting pointwise on two vectors of $\T{\orb_{\xi_\bullet}}{\xi_\bullet}$ coming from $\mathfrak{g}$ via the map in \eqref{sequence}. 

$\omega_{\xi_\bullet}$ will be called called KKS 2-form after Kostant, Kirillov and Souriau and has the local expression 
\begin{equation}
\omega_{\xi_\bullet}(\ad_X^*\xi_\bullet,\ad_Y^*\xi_\bullet)=\beta_{\xi_\bullet}(X,Y)
\end{equation}
on the base point $\xi_\bullet$  of the orbit $\orb_{\xi_\bullet}$.
\end{definition}

By virtue of the above construction we may prove the following proposition (see \cite{Kir})
\begin{proposition}
The kernel of $\beta_{\xi_\bullet}$ is exactly the stabilizing algebra $\mathfrak{h}_{\xi_\bullet}$, and the form is invariant under the action of the stabiliser $H_{\xi_\bullet}$. This means that $\omega_{\xi_\bullet}$ is non-degenerate on the tangent $\T{\orb_{\xi_\bullet}}{\xi_\bullet}$ for every $\xi_\bullet$.

\end{proposition}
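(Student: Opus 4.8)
The plan is to establish the three claims in sequence, each being a direct consequence of the definitions and exact sequences recalled above.

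First I would compute the kernel of $\beta_{\xi_\bullet}$. Using the second equality in \eqref{KKS}, namely $\beta_{\xi_\bullet}(X,Y) = -\langle \cad_X\xi_\bullet, Y\rangle$, a vector $X\in\mathfrak g$ lies in $\ker\beta_{\xi_\bullet}$ exactly when $\langle\cad_X\xi_\bullet,Y\rangle = 0$ for all $Y\in\mathfrak g$. Since the canonical pairing between $\mathfrak g^*$ and $\mathfrak g$ is nondegenerate, this is equivalent to $\cad_X\xi_\bullet = 0$, i.e. $X\in\ker(\cad\,\xi_\bullet)$. But the exact sequence \eqref{sequence} records precisely that $\ker(\cad\,\xi_\bullet) = \mathfrak h_{\xi_\bullet}$, whence $\ker\beta_{\xi_\bullet} = \mathfrak h_{\xi_\bullet}$; as $\beta_{\xi_\bullet}$ is skew-symmetric, the left and right kernels coincide.

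Next I would check the $H_{\xi_\bullet}$-invariance. For $h\in H_{\xi_\bullet}$, combining the $\Ad$-invariance of the Lie bracket with the defining property of the co-adjoint action one gets
\begin{equation}
\beta_{\xi_\bullet}(\Ad_h X,\Ad_h Y) = \langle\xi_\bullet,\Ad_h[X,Y]\rangle = \langle\Cad_{h^{-1}}\xi_\bullet,[X,Y]\rangle = \langle\xi_\bullet,[X,Y]\rangle = \beta_{\xi_\bullet}(X,Y),
\end{equation}
where the penultimate step uses that $h^{-1}\in H_{\xi_\bullet}$ fixes $\xi_\bullet$. Together with the kernel computation, this shows that $\beta_{\xi_\bullet}$ descends to a nondegenerate skew-symmetric bilinear form on the quotient $\mathfrak g/\mathfrak h_{\xi_\bullet}$, which via the isomorphism $d\pi\equiv\cad$ of \eqref{sequence} is identified with $\T{\orb_{\xi_\bullet}}{\xi_\bullet}$. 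By definition this descended form is $\omega_{\xi_\bullet}$, so $\omega_{\xi_\bullet}$ is nondegenerate at $\xi_\bullet$; and since $G$ acts transitively on $\orb_{\xi_\bullet}$ and the whole construction is $G$-equivariant, nondegeneracy propagates to every point of the orbit.

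I expect the only genuinely delicate point to be the passage from the infinitesimal statement at the base point $\xi_\bullet$ to a globally well-defined $2$-form on all of $\orb_{\xi_\bullet}$: two group elements $g$ representing the same point $\Cad_g\xi_\bullet$ differ by an element of $H_{\xi_\bullet}$, so without the invariance established above the pointwise recipe would depend on the chosen representative. The $H_{\xi_\bullet}$-invariance is exactly what removes this ambiguity. Everything else — the kernel identification and the invariance computation itself — is a routine unwinding of the definitions together with the nondegeneracy of the canonical pairing.
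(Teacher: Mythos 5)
Your proof is correct and is exactly the standard argument that the paper itself does not spell out, deferring instead to Kirillov: identifying $\ker\beta_{\xi_\bullet}$ with $\ker(\cad\,\xi_\bullet)=\mathfrak{h}_{\xi_\bullet}$ via the nondegenerate pairing, checking $H_{\xi_\bullet}$-invariance from $\Ad$-equivariance of the bracket, and descending to a nondegenerate form on $\qsp{\mathfrak{g}}{\mathfrak{h}_{\xi_\bullet}}\simeq \T{\orb_{\xi_\bullet}}{\xi_\bullet}$, propagated over the orbit by $G$-equivariance. Your closing remark about why the stabiliser invariance is needed for the pointwise recipe to be well defined is also the right observation; there is nothing to add.
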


\begin{remark}
Since $\xi_\bullet$ is only a representative of its orbit, we may act on it with an element $g$ of $G$. The change in $\xi=\Cad_g\xi_\bullet$ will be compensated by the change of the preimages $X_\xi,Y_\xi$:
\[
<\xi,[X_\xi,Y_\xi]>=<\Cad_g\xi_\bullet,\Ad_g[X_\bullet,Y_\bullet]>=<\xi_\bullet,[X_\bullet,Y_\bullet]>.
\]
This can be summarized by saying that the 2-form is $G$-equivariant.
\end{remark}

The most relevant result in this section is given by the following:

\begin{theorem}
On every co-adjoint orbit $\orb$ of a Lie group $G$ there exists a non-degenerate, closed, G-invariant 2-form $\omega$. The pair $(\orb,\omega)$ is thus a symplectic manifold.
\end{theorem}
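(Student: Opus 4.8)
The plan is to realise the pointwise-defined KKS 2-form as the descent of an \emph{exact} 2-form on the total space $G$ of the principal bundle $\pi\colon G\to\orb_{\xi_\bullet}$ from \eqref{cadfibration}; exhibiting it this way establishes smoothness, well-definedness and closedness all at once, after which non-degeneracy and invariance are read off from the material already developed.

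First I would collect what is already available. By the proposition preceding the theorem, the bilinear form $\beta_{\xi_\bullet}$ of \eqref{KKS} has kernel exactly $\mathfrak{h}_{\xi_\bullet}$, so through the exact sequence \eqref{sequence} it descends to a skew, non-degenerate form $\omega_{\xi_\bullet}$ on $\T{\orb_{\xi_\bullet}}{\xi_\bullet}$; and by the equivariance remark the family $\{\omega_\xi\}_{\xi\in\orb_{\xi_\bullet}}$ obtained by transporting $\omega_{\xi_\bullet}$ along the transitive $G$-action is consistent and $G$-invariant. What is left is to check that this family is a \emph{smooth} 2-form $\omega$ and that $d\omega=0$.

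To do this I would introduce the left Maurer--Cartan form $\theta\in\Omega^1(G)\otimes\mathfrak{g}$ and the real-valued 1-form $\alpha:=\langle\xi_\bullet,\theta\rangle$ on $G$. Using the structure equation $d\theta+\tfrac12[\theta\wedge\theta]=0$ one gets $d\alpha(X^L,Y^L)=-\langle\xi_\bullet,[X,Y]\rangle=-\beta_{\xi_\bullet}(X,Y)$ on left-invariant fields. The key claim is that $d\alpha$ is \emph{basic} for the principal $H_{\xi_\bullet}$-bundle $\pi$: it is horizontal because contracting with the vertical field $X^L$ of $X\in\mathfrak{h}_{\xi_\bullet}$ gives $-\beta_{\xi_\bullet}(X,\cdot)\equiv0$ since $\ker\beta_{\xi_\bullet}=\mathfrak{h}_{\xi_\bullet}$; and it is $H_{\xi_\bullet}$-invariant because $R_h^*\theta=\Ad_{h^{-1}}\theta$ yields $R_h^*\alpha=\langle\Cad_h\xi_\bullet,\theta\rangle=\alpha$ for $h\in H_{\xi_\bullet}$. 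Hence $d\alpha=\pi^*\eta$ for a unique smooth 2-form $\eta$ on $\orb_{\xi_\bullet}$; since $\alpha$ (and thus $\eta$) is left-invariant, $\eta$ is $G$-invariant, and comparing values at $\xi_\bullet$ via $\pi_*X^L|_e=\cad_X\xi_\bullet$ identifies $\eta=-\omega_{\xi_\bullet}$, so $\omega:=-\eta$ is exactly the $G$-invariant extension of $\omega_{\xi_\bullet}$ and $\pi^*\omega=-d\alpha$.

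Closedness is then formal: $\pi^*(d\omega)=d(\pi^*\omega)=-dd\alpha=0$, and since $\pi$ is a surjective submersion $\pi^*$ is injective on forms, so $d\omega=0$. Non-degeneracy holds at $\xi_\bullet$ by the first step and hence everywhere by $G$-invariance and transitivity of the action, while $G$-invariance is exactly the equivariance remark; therefore $(\orb_{\xi_\bullet},\omega)$ is a symplectic manifold. The main obstacle is the closedness argument, and more precisely pinning down that $d\alpha$ is basic — horizontality genuinely uses $\ker\beta_{\xi_\bullet}=\mathfrak{h}_{\xi_\bullet}$ and $H$-invariance genuinely uses that $h$ stabilises $\xi_\bullet$ — together with keeping the signs straight in $\pi^*\omega=-d\alpha$; everything after "$\omega$ is the pullback of an exact form along a submersion" is automatic. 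An alternative route, if one prefers, is to use the linear Lie--Poisson structure on $\mathfrak{g}^*$ whose symplectic leaves are the co-adjoint orbits and deduce $d\omega=0$ from the Jacobi identity, but the Maurer--Cartan argument is the most self-contained given what has been set up in this appendix.
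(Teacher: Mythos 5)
Your argument is correct and complete, and it is worth noting that the paper itself does not actually supply a proof here: it only remarks that closedness ``can be done by direct calculations'' and points to Kirillov for more elegant arguments. Your Maurer--Cartan descent is precisely one of those elegant routes, carried out in full. The ingredients you use are all consistent with the paper's conventions: the preceding proposition gives $\ker\beta_{\xi_\bullet}=\mathfrak{h}_{\xi_\bullet}$ (equivalently $\ker(\cad\,\xi_\bullet)=\mathfrak{h}_{\xi_\bullet}$, which is what makes $\iota_{X^L}d\alpha=0$ for vertical $X$), the equivariance remark gives $G$-invariance, and the structure equation plus $R_h^*\theta=\Ad_{h^{-1}}\theta$ with $h$ stabilising $\xi_\bullet$ gives that $d\alpha$ is basic, so that $\omega$ is smooth and $\pi^*\omega=-d\alpha$ is exact upstairs, whence $d\omega=0$ by injectivity of $\pi^*$ along the surjective submersion $\pi$. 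The ``direct calculation'' the paper alludes to would instead evaluate $d\omega$ on three fundamental vector fields $\cad_X\xi,\cad_Y\xi,\cad_Z\xi$ and reduce its vanishing to the Jacobi identity in $\mathfrak{g}$; your approach buys closedness for free from $d^2=0$ at the cost of setting up the descent, and it has the additional merit of exhibiting $\omega$ as manifestly smooth, which the pointwise definition in the appendix leaves implicit. The only place where care is genuinely needed is the one you flag: basicness requires invariance under all of $H_{\xi_\bullet}$ (not merely its identity component), and you verify this at the group level rather than infinitesimally, so the argument goes through even for disconnected stabilisers.
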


The theorem is proved by showing that the KKS form we have just constructed is symplectic on $\orb_{\rho_0}$. This can be done by direct calculations, but there exist other, more elegant proofs as discussed in \cite{Kir}.

\section{Compact Lie algebras and orbits}\label{App:generalities}
This appendix is a brief review on Lie algebras of Lie groups that are compact as topological spaces. Results on compact Lie algebras will play a fundamental role in the application to Quantum Mechanics, for we are interested in the particular case of $U(n)$, the group of unitary matrices.

The facts presented here have no ambition of being exhaustive: this is rather a transversal survey of essential results that will be important in the applications. For a more thorough exposition on this subject we refer to \cite{Kir,Hsiang}.

To begin with, we will need the basic definitions that will be used throughout the section:
\begin{definition}
A \ev{semisimple Lie algebra} is a Lie algebra that decomposes in direct sums of \ev{simple} Lie algebras, which in turn are Lie algebras with no non-trivial ideals. In other words a semisimple Lie algebra does not have abelian ideals. A Lie algebra is called \ev{compact} if it is the Lie algebra of a compact Lie group. 

Moreover, we define a symmetric bilinear form on Lie algebras through
\begin{equation}
K(X,Y)=\Tr{\ad X \ad Y}
\end{equation}
such a bilinear is called \ev{Killing-Cartan form} or simply Killing form.
\end{definition}

The first structural theorem for semisimple algebras is the following, due to Cartan:
\begin{theorem}
A Lie algebra $\mathfrak{g}$ is \ev{semisimple} if and anly if the killing form is nondegenerate on $\mathfrak{g}.$
\end{theorem}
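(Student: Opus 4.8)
The plan is to prove both implications of this criterion directly from the trace form, with the one genuinely non-formal input packaged into Cartan's solvability criterion. Throughout, write $K(X,Y)=\Tr{\ad X\,\ad Y}$ and record the invariance identity $K([Z,X],Y)=-K(X,[Z,Y])$, which follows from $\ad_{[Z,X]}=[\ad_Z,\ad_X]$ and cyclicity of the trace. An immediate consequence is that the \emph{radical} $\mathfrak{g}^{\perp}:=\{X\in\mathfrak{g}\ |\ K(X,Y)=0\ \forall\,Y\in\mathfrak{g}\}$ is an ideal of $\mathfrak{g}$: if $X\in\mathfrak{g}^{\perp}$ then $K([Z,X],Y)=-K(X,[Z,Y])=0$ for every $Y$, so $[Z,X]\in\mathfrak{g}^{\perp}$.

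First I would show: if $K$ is nondegenerate, then $\mathfrak{g}$ is semisimple. Let $\mathfrak{a}\subseteq\mathfrak{g}$ be an abelian ideal and take $X\in\mathfrak{a}$, $Y\in\mathfrak{g}$. Because $\mathfrak{a}$ is an ideal containing $X$, the operator $\ad_X\ad_Y$ carries $\mathfrak{g}$ into $\mathfrak{a}$ and, since $[\mathfrak{a},\mathfrak{a}]=0$, restricts to $0$ on $\mathfrak{a}$; hence $(\ad_X\ad_Y)^2=0$ and $K(X,Y)=\Tr{\ad_X\ad_Y}=0$. As $Y$ was arbitrary, $X\in\mathfrak{g}^{\perp}=\{0\}$, so $\mathfrak{g}$ has no nonzero abelian ideal. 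Finally, if the solvable radical $\mathfrak{r}\subseteq\mathfrak{g}$ were nonzero, the last nonzero term of its derived series would be characteristic in $\mathfrak{r}$, hence an ideal of $\mathfrak{g}$, and abelian — a contradiction. Thus $\mathfrak{r}=0$, i.e. $\mathfrak{g}$ is semisimple (and therefore a direct sum of simple ideals, matching the primary definition).

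For the converse, suppose $\mathfrak{g}$ is semisimple. I would argue that the radical $\mathfrak{s}:=\mathfrak{g}^{\perp}$ is a \emph{solvable} ideal, so that semisimplicity forces $\mathfrak{s}=0$. On $\mathfrak{s}$ the form $K$ vanishes identically, so $\Tr{\ad_X\ad_Y}=0$ for all $X,Y\in\mathfrak{s}$; applying Cartan's solvability criterion to the representation $\ad|_{\mathfrak{s}}\colon\mathfrak{s}\to\mathfrak{gl}(\mathfrak{g})$ shows that $\ad(\mathfrak{s})$ is solvable, and since $\ker(\ad|_{\mathfrak{s}})=\mathfrak{s}\cap Z(\mathfrak{g})$ is abelian, $\mathfrak{s}$ itself is solvable. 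Hence $\mathfrak{s}=0$ and $K$ is nondegenerate.

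The main obstacle is exactly Cartan's solvability criterion invoked above — that a subalgebra of $\mathfrak{gl}(V)$ on which $\Tr{xy}$ vanishes for $x$ in the algebra and $y$ in its derived subalgebra must be solvable. This is the step that is not purely formal: its proof rests on the additive Jordan decomposition together with a replica argument carried out over the complexification, and I would cite it from \cite{Kir} rather than reprove it here. A minor accompanying point is the passage between $\mathbb{R}$ and $\mathbb{C}$ (relevant since the algebras of interest, such as $\mathfrak{u}(n)$, are real): the Killing form of the complexification is the $\mathbb{C}$-bilinear extension of $K$, and semisimplicity is stable under complexification, so the real statement follows from the complex one.
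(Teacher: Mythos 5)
Your proof is correct, and it is the standard textbook argument for Cartan's semisimplicity criterion: the trace form vanishes on abelian ideals because $(\ad_X\ad_Y)^2=0$, and the orthogonal radical $\mathfrak{g}^{\perp}$ is a solvable ideal by Cartan's solvability criterion applied to $\ad|_{\mathfrak{g}^{\perp}}$. The paper itself offers no proof of this statement — it is quoted as background and deferred to the references — so there is nothing in the text to compare your route against. One small point worth closing: you deduce ``no nonzero abelian ideal'' and then assert in passing that $\mathfrak{g}$ is a direct sum of simple ideals; since the paper's primary definition of semisimple is the direct-sum decomposition (with absence of abelian ideals offered only as a reformulation), a fully self-contained argument would note that this decomposition is itself obtained from the nondegeneracy of $K$ just established, by splitting off a minimal ideal together with its $K$-orthogonal complement and iterating. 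As you say, the only genuinely non-formal input is Cartan's solvability criterion, and citing it rather than reproving it is reasonable here; your remark on complexification correctly handles the fact that the algebras of interest are real.
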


By general results in Lie theory we have that a compact Lie algebra with no center $\mathfrak{g}$  can always be considered as the Lie algebra of a connected and simply connected  compact Lie group. Moreover, 
\begin{proposition}\label{structcompact}
Let $\mathfrak{g}$ be a compact Lie algebra, then the Killing form $K(\cdot,\cdot)$ is negative semidefinite and the algebra decomposes as 
\begin{equation}
\mathfrak{g}=\mathfrak{z}(\mathfrak{g}) \oplus [\mathfrak{g},\mathfrak{g}]
\end{equation}
with the second factor being semisimple.
\end{proposition}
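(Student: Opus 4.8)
\textbf{Proof plan for Proposition \ref{structcompact}.}

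The plan is to exploit the compactness of $\mathfrak{g}$ in the form of an $\mathrm{Ad}$-invariant inner product, which is the workhorse of the whole argument. First I would invoke the standard fact that a compact Lie group $G$ with Lie algebra $\mathfrak{g}$ admits a bi-invariant Riemannian metric, obtained by averaging an arbitrary inner product on $\mathfrak{g}$ against the Haar measure; differentiating bi-invariance at the identity yields an $\mathrm{Ad}$-invariant, hence $\mathrm{ad}$-skew-symmetric, positive-definite bilinear form $B(\cdot,\cdot)$ on $\mathfrak{g}$, meaning $B(\mathrm{ad}_X Y, Z) = -B(Y, \mathrm{ad}_X Z)$ for all $X,Y,Z$. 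This is the only nontrivial input and I would cite it from \cite{Kir} or \cite{Hsiang}.

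From here the negative semidefiniteness of the Killing form is immediate: for fixed $X$, the operator $\mathrm{ad}_X$ is skew-symmetric with respect to the positive-definite form $B$, so in a $B$-orthonormal basis $\mathrm{ad}_X$ is represented by a real antisymmetric matrix $M$; then $K(X,X) = \mathrm{Tr}(\mathrm{ad}_X \, \mathrm{ad}_X) = \mathrm{Tr}(M^2) = -\mathrm{Tr}(M^\top M) = -\sum_{ij} M_{ij}^2 \le 0$. This gives $K(X,X)\le 0$ for every $X$, i.e. $K$ is negative semidefinite.

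For the decomposition, I would set $\mathfrak{a} := \mathfrak{z}(\mathfrak{g})$ (the center) and let $\mathfrak{b}$ be its $B$-orthogonal complement, so that $\mathfrak{g} = \mathfrak{a} \oplus \mathfrak{b}$ as vector spaces. Because $B$ is $\mathrm{ad}$-invariant, $\mathfrak{b}$ is an ideal: if $Y \in \mathfrak{b}$ and $Z \in \mathfrak{a}$ then for any $X$, $B([X,Y],Z) = -B(Y,[X,Z]) = 0$ since $[X,Z]=0$, so $[X,Y]\in\mathfrak{b}$. The center of $\mathfrak{b}$ is contained in the center of $\mathfrak{g}$ (any element of $\mathfrak{g}$ splits as a central part plus a part in $\mathfrak{b}$, and brackets only see the $\mathfrak{b}$-part), hence equals $\mathfrak{b}\cap\mathfrak{a} = 0$; so $\mathfrak{b}$ is a centerless compact Lie algebra. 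Restricting the above computation of $K$ to $\mathfrak{b}$ and using that $\mathrm{ad}_X|_{\mathfrak{b}}$ has trivial kernel intersection with the centerless condition shows the Killing form of $\mathfrak{b}$ is in fact \emph{negative definite}, hence nondegenerate, so by Cartan's criterion (the theorem just quoted) $\mathfrak{b}$ is semisimple. Finally I would identify $\mathfrak{b}$ with $[\mathfrak{g},\mathfrak{g}]$: since $\mathfrak{a}$ is central, $[\mathfrak{g},\mathfrak{g}] = [\mathfrak{b},\mathfrak{b}] \subseteq \mathfrak{b}$, and conversely $\mathfrak{b}$ being semisimple satisfies $[\mathfrak{b},\mathfrak{b}] = \mathfrak{b}$, so $[\mathfrak{g},\mathfrak{g}] = \mathfrak{b}$, completing the proof.

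The main obstacle — really the only one — is establishing existence of the $\mathrm{Ad}$-invariant inner product; once that is in hand everything else is linear algebra. A secondary subtlety worth stating carefully is the passage from negative \emph{semi}definite on all of $\mathfrak{g}$ to negative \emph{definite} on the semisimple factor $\mathfrak{b}$: one must check that $K(X,X)=0$ for $X\in\mathfrak{b}$ forces $\mathrm{ad}_X=0$ on $\mathfrak{g}$, i.e. $X\in\mathfrak{z}(\mathfrak{g})\cap\mathfrak{b} = 0$, which uses precisely that $B$ is positive definite so that $\mathrm{Tr}(M^\top M)=0 \Rightarrow M=0$.
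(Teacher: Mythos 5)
Your argument is correct and is the standard one: the paper states Proposition \ref{structcompact} without proof as a review fact, deferring to \cite{Kir,Hsiang}, and the averaging construction of an $\mathrm{Ad}$-invariant inner product followed by the skew-symmetry computation for $\mathrm{ad}_X$ is exactly the argument in those references. The only step worth making fully explicit is that Cartan's criterion requires the \emph{intrinsic} Killing form of $\mathfrak{b}$, which coincides with the restriction of $K$ to $\mathfrak{b}$ because $\mathrm{ad}_X$ annihilates the central summand $\mathfrak{z}(\mathfrak{g})$ for $X\in\mathfrak{b}$; with that observation your negative-definiteness argument closes the proof.
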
 
Then we have that $\mathfrak{g}$ compact with no center implies $\mathfrak{g}$ semisimple. This will be fundamental in what follows. We will now look at complex semisimple Lie algebras for a little while.

\begin{definition}
Consider a complex semisimple Lie algebra $\mathfrak{c}$. 

Any endomorphism $E\in\mathrm{End}(V)$ is called \ev{semisimple} if it is diagonalizable. Moreover, we call a subalgebra $\mathfrak{h}\subset\mathfrak{c}$ a \ev{Cartan subalgebra}  if it is maximal abelian and if, everytime $H\in\mathfrak{h}$, its adjoint operator $\ad_H$ is semisimple as an endomorphism.
\end{definition}
About the existence of Cartan subalgebras we have the following classical theorem:
\begin{theorem}
Let $\mathfrak{c}$ be a complex compact semisimple Lie algebra, then there exists a unique Cartan subalgebra $\mathfrak{h}$ up to isomorphisms, which is selfnormalizing, that is to say $N_{\mathfrak{c}}(\mathfrak{h}):=\{\xi\in\mathfrak{c}\ |\ [\xi,\eta]\in\mathfrak{h},\forall \eta\in\mathfrak{h}\}=\mathfrak{h}$.
\end{theorem}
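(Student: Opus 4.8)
The plan is to prove three things in turn: the \emph{existence} of a Cartan subalgebra $\mathfrak{h}$, its \emph{uniqueness} up to isomorphism (indeed up to an inner automorphism of $\mathfrak{c}$), and the self-normalizing identity $N_{\mathfrak{c}}(\mathfrak{h})=\mathfrak{h}$. I would handle existence through maximal toral subalgebras, uniqueness through the classical conjugacy theorem, and self-normalization by a short computation with the root-space decomposition.

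For existence, I would first note that $\mathfrak{c}$ contains a nonzero $\ad$-semisimple element: were every $\ad_x$ nilpotent, Engel's theorem would make $\mathfrak{c}$ nilpotent and its Killing form vanish, contradicting semisimplicity (Cartan's criterion, or Proposition \ref{structcompact}). Taking a maximal subalgebra $\mathfrak{h}$ whose elements are all $\ad$-semisimple, one then verifies the standard structural facts: such an $\mathfrak{h}$ is abelian (if $\ad_x|_{\mathfrak{h}}\neq 0$ for some $x\in\mathfrak{h}$, a nonzero eigenvector $y$ of $\ad_x|_\mathfrak{h}$ with eigenvalue $a\neq 0$ gives $\ad_y(x)=-ay$, which is incompatible with $\ad_y$ being semisimple together with $x\in\ker(\ad_y|_{\mathfrak{h}})$), and it is self-centralizing, hence maximal abelian; thus $\mathfrak{h}$ is by definition a Cartan subalgebra.

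For uniqueness, the key input — and the step I expect to be the main obstacle — is that the group $\mathrm{Int}(\mathfrak{c})$ of inner automorphisms acts transitively on the set of Cartan subalgebras. The sketch I would give: the regular locus $\{x\in\mathfrak{c}:\ \dim\ker(\ad_x)=\mathrm{rank}\,\mathfrak{c}\}$ is Zariski-open and dense, hence connected in the usual topology; every Cartan subalgebra meets it, and a regular semisimple element lies in exactly one Cartan subalgebra, namely its centralizer; an openness and dimension argument then upgrades this to transitivity of $\mathrm{Int}(\mathfrak{c})$, so any two Cartan subalgebras are conjugate and \emph{a fortiori} isomorphic. In the compact setting used elsewhere in the paper one may instead invoke the classical theorem that all maximal tori of a compact connected Lie group are conjugate and complexify. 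I expect everything around this transitivity statement (density of the regular locus, or equivalently the maximal-torus theorem) to be the genuine content; existence and self-normalization are essentially formal by comparison.

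For self-normalization, fix a Cartan subalgebra $\mathfrak{h}$ and use the root-space decomposition $\mathfrak{c}=\mathfrak{h}\oplus\bigoplus_{\alpha\in\Phi}\mathfrak{c}_\alpha$, with $\Phi$ the set of nonzero roots. Writing a normalizing element as $\xi=h+\sum_{\alpha\in\Phi}c_\alpha x_\alpha$ with $h\in\mathfrak{h}$ and $x_\alpha\in\mathfrak{c}_\alpha$, one computes $[\xi,\eta]=-\sum_{\alpha\in\Phi}\alpha(\eta)\,c_\alpha x_\alpha$ for every $\eta\in\mathfrak{h}$; since the root spaces meet $\mathfrak{h}$ trivially, the requirement $[\xi,\eta]\in\mathfrak{h}$ for all $\eta$ forces $\alpha(\eta)c_\alpha=0$ for every $\alpha$ and $\eta$, and choosing for each $\alpha$ an $\eta$ with $\alpha(\eta)\neq 0$ (possible since $\alpha\neq 0$) gives $c_\alpha=0$. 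Hence $\xi=h\in\mathfrak{h}$, i.e. $N_{\mathfrak{c}}(\mathfrak{h})=\mathfrak{h}$, which completes the argument.
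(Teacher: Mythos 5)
The paper does not actually prove this statement: it is quoted as a classical theorem, and the appendix explicitly defers all proofs to \cite{Kir} and \cite{Hsiang}, so there is no in-paper argument to compare yours against. Your proposal is the standard textbook route (existence via maximal toral subalgebras, uniqueness via the conjugacy theorem, self-normalization via the root-space decomposition), and its architecture is sound. Two remarks on where the real weight sits. First, you rightly flag the conjugacy theorem as the hard step, but your sketch of it has a wrinkle: with the regular locus defined as $\{x:\dim\ker(\ad_x)=\mathrm{rank}\,\mathfrak{c}\}$ you also pick up regular nilpotent elements (e.g.\ a principal nilpotent in $\mathfrak{sl}_2$), which lie in no Cartan subalgebra at all; the connectedness argument must be run on the regular \emph{semisimple} locus (the complement of the vanishing of the discriminant), or regularity should be defined via the generalized $0$-eigenspace of $\ad_x$. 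The compact-group alternative you mention (conjugacy of maximal tori, then complexify) is cleaner and better adapted to the setting of this paper. Second, your self-normalization computation silently uses $\mathfrak{c}_0=\mathfrak{h}$ in the decomposition $\mathfrak{c}=\mathfrak{h}\oplus\bigoplus_\alpha\mathfrak{c}_\alpha$; with the paper's definition of a Cartan subalgebra as \emph{maximal abelian} this is immediate (any $x$ centralizing $\mathfrak{h}$ would enlarge it to a strictly bigger abelian subalgebra), but in your existence step the fact that a maximal \emph{toral} subalgebra is maximal abelian, i.e.\ self-centralizing, is precisely the nontrivial lemma that you dismiss as ``standard structural facts.'' So the two genuinely load-bearing inputs that remain asserted rather than proved are that self-centralizing lemma and the conjugacy theorem; everything else checks out.
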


Even if we will be interested in real compact real Lie algebras it is important to have a look at this more general setting, where we consider complex Lie algebras. As we will see, it is in the context of compex semisimple Lie algebras that the following important concepts will find their natural home. 

\begin{definition}
Let $\mathfrak{h}$ be a Cartan subalgebra of the complex semisimple Lie algebra $\mathfrak{c}$, $\alpha\in\mathfrak{h}^*$ and consider the sets 
\begin{equation}
\mathfrak{c}_\alpha:=\{X\in\mathfrak{c}\ |\ [H,X]\equiv \ad_HX=\alpha(H)X,\ \forall H\in \mathfrak{h}\}.
\end{equation}
Everytime the set $\mathfrak{c}_\alpha$ is nontrivial and $\alpha\not=0$ we say that $\alpha\in\mathfrak{h}^*$ is a \ev{root} and the corresponding $\mathfrak{c}_\alpha$ is called \ev{root space} for $\alpha$. The choice $\alpha=0$ gives us $\mathfrak{c}_0\equiv\mathfrak{h}$. 

Denote by $R\subset\mathfrak{h}^*$ the set of all roots, such a set is called \ev{root system}.
\end{definition}
The root spaces $\mathfrak{c}_\alpha$ are roughly speaking the simultaneous eigenspaces for all the operators $\ad_H$ with eigenvalues $\alpha(H)$, moreover the root system $R$ inherits an inner product from $\mathfrak{g}$, namely the dual of the restriction of the Killing form to $\mathfrak{h}$. The span of $R$ is an Euclidean vector subspace $E$ of $\mathfrak{h}^*$, and one can show \cite{Kir} that $K^*(\cdot,\cdot)$ is positive definite on $E$ and that $R$ satisfies the axioms of \emph{nondegenerate, reduced root systems} in $E$.
Then we have the following properties:
\begin{proposition}
Let $R$ be the root system associated to a complex semisimple Lie algebra $\mathfrak{c}$. The following decomposition holds:
\begin{equation}
\mathfrak{c}=\mathfrak{h}\oplus\bigoplus_{\alpha\in R}\mathfrak{c}_\alpha.
\end{equation}
Moreover\
\begin{equation}\label{mult}
[\mathfrak{c}_\alpha,\mathfrak{c}_\beta]\subset\mathfrak{c}_{\alpha+\beta},\ \forall\alpha,\beta\in R.
\end{equation}
The set of roots spans $\mathfrak{h}^*$ and the Killing form vanishes on vectors in different eigenspaces, that is to say $\alpha\not=-\beta$ implies $K(\mathfrak{c}_\alpha,\mathfrak{c}_\beta)=0$. Last, the Killing form is nondegenerate on the Cartan subalgebra. 
\end{proposition}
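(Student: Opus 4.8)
The plan is to construct the decomposition from a commuting family of semisimple operators, and then to extract the remaining three assertions from the Jacobi identity together with Cartan's criterion (nondegeneracy of the Killing form on a semisimple Lie algebra, which is the structural theorem stated above).

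First I would observe that $\{\ad_H\mid H\in\mathfrak h\}$ is a commuting family of endomorphisms of the finite-dimensional vector space $\mathfrak c$ — commuting because $\mathfrak h$ is abelian — each of which is semisimple by the very definition of a Cartan subalgebra. The standard linear-algebra fact that a commuting family of diagonalizable operators is simultaneously diagonalizable then yields $\mathfrak c=\bigoplus\mathfrak c_\alpha$, the sum being over those $\alpha\in\mathfrak h^*$ with $\mathfrak c_\alpha\neq 0$, since the $\mathfrak c_\alpha$ are exactly the simultaneous eigenspaces. To pin down the zero weight space I would note that $\mathfrak c_0$ is the centralizer $C_{\mathfrak c}(\mathfrak h)$, hence contained in the normalizer $N_{\mathfrak c}(\mathfrak h)$, which equals $\mathfrak h$ by the self-normalizing property quoted above; the reverse inclusion $\mathfrak h\subseteq\mathfrak c_0$ is immediate since $\mathfrak h$ is abelian. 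This gives $\mathfrak c=\mathfrak h\oplus\bigoplus_{\alpha\in R}\mathfrak c_\alpha$. I expect this identification $\mathfrak c_0=\mathfrak h$ to be the only genuinely delicate point; everything else is formal.

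Next, the inclusion $[\mathfrak c_\alpha,\mathfrak c_\beta]\subseteq\mathfrak c_{\alpha+\beta}$ is the Jacobi identity: for $X\in\mathfrak c_\alpha$, $Y\in\mathfrak c_\beta$ and $H\in\mathfrak h$ one has $[H,[X,Y]]=[[H,X],Y]+[X,[H,Y]]=(\alpha(H)+\beta(H))[X,Y]$, so $[X,Y]\in\mathfrak c_{\alpha+\beta}$ (the claim being vacuous when $\alpha+\beta$ is neither a root nor $0$). For the spanning assertion, if $R$ failed to span $\mathfrak h^*$ there would be a nonzero $H_0\in\mathfrak h$ with $\alpha(H_0)=0$ for all $\alpha\in R$; then $\ad_{H_0}$ would annihilate $\mathfrak h=\mathfrak c_0$ and every $\mathfrak c_\alpha$, hence vanish identically, placing $H_0$ in the center of $\mathfrak c$ — impossible, since a semisimple Lie algebra has trivial center.

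Finally, for the orthogonality statement I would use that $\ad_X\ad_Y$ maps $\mathfrak c_\gamma$ into $\mathfrak c_{\gamma+\alpha+\beta}$ whenever $X\in\mathfrak c_\alpha$, $Y\in\mathfrak c_\beta$; in a basis of $\mathfrak c$ adapted to the weight decomposition the matrix of $\ad_X\ad_Y$ carries the $\gamma$-block to the $(\gamma+\alpha+\beta)$-block, so if $\alpha+\beta\neq 0$ all diagonal blocks vanish and $K(X,Y)=\Tr{\ad_X\ad_Y}=0$. Nondegeneracy of $K$ on $\mathfrak h$ then drops out by combining this with Cartan's criterion: since $0+\alpha=\alpha\neq 0$ for every root $\alpha$, the subspace $\mathfrak h=\mathfrak c_0$ is $K$-orthogonal to each $\mathfrak c_\alpha$, so any $H\in\mathfrak h$ with $K(H,\mathfrak h)=0$ in fact satisfies $K(H,\mathfrak c)=0$, forcing $H=0$ because $K$ is nondegenerate on $\mathfrak c$.
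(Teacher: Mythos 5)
Your proof is correct and complete: the simultaneous diagonalization of the commuting semisimple family $\{\ad_H\}$, the identification $\mathfrak{c}_0=\mathfrak{h}$ via the self-normalizing property, the Jacobi-identity computation for \eqref{mult}, the center argument for spanning, and the block-trace argument combined with Cartan's criterion for the statements about $K$ are exactly the standard argument found in the references the paper cites. The paper itself states this proposition without proof, deferring to the literature, so there is nothing to contrast your argument with; every external fact you invoke (semisimplicity of $\ad_H$, self-normalization of $\mathfrak{h}$, nondegeneracy of $K$ on $\mathfrak{c}$, triviality of the center) is explicitly quoted earlier in the appendix, so the proof is self-contained relative to the paper.
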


\begin{remark}
Notice that only $\mathfrak{c}_{-\alpha}$ can have nonzero Killing form with $\mathfrak{c}_\alpha$. Opposite roots are then paired in $R$, and the pairing is indeed given by the Killing form.

Moreover, the nondegeneracy of $K\big|_\mathfrak{h}$ lets us identify $\mathfrak{h}\simeq\mathfrak{h}^*$. Therefore we can denote by $H_\alpha\in\mathfrak{h}$ the dual vector to the root $\alpha$, $K(H_\alpha,H)=\alpha(H)$. It will be called \ev{dual root}. Non zero elements in $\mathfrak{c}_\alpha$ will be denoted by $X_\alpha$ and will be called \ev{root vectors}.
\end{remark}

The following properties also hold:
\begin{proposition}
Let $R$ be a root system for $\mathfrak{h}\subset\mathfrak{c}$ as before. Then
\begin{enumerate}
\item $[X_\alpha,X_{-\alpha}]=K(X_\alpha,X_{-\alpha})H_\alpha$. 

It is always possible to normalise $K(X_\alpha,X_{-\alpha})=1$. and $\mathfrak{h}\equiv\mathfrak{c}_0$ is the complex span of $H_\alpha,\ \alpha\in R$. 
\item $K^*(\alpha,\alpha)\not=0$.
\item $\mathrm{dim}\mathfrak{c}_\alpha=1\ \forall \alpha\in R$.
\end{enumerate}
\end{proposition}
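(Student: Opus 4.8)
The plan is to prove the three items in order, using only structure already at hand: the root-space decomposition $\mathfrak{c}=\mathfrak{h}\oplus\bigoplus_{\alpha\in R}\mathfrak{c}_\alpha$, the grading rule $[\mathfrak{c}_\alpha,\mathfrak{c}_\beta]\subset\mathfrak{c}_{\alpha+\beta}$, the orthogonality $K(\mathfrak{c}_\alpha,\mathfrak{c}_\beta)=0$ for $\beta\neq-\alpha$, the nondegeneracy of $K|_\mathfrak{h}$ together with the identification $\mathfrak{h}\simeq\mathfrak{h}^*$ it induces, and the $\ad$-invariance $K([X,Y],Z)=K(X,[Y,Z])$. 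For item (1): since $[X_\alpha,X_{-\alpha}]$ lies in $\mathfrak{c}_{\alpha+(-\alpha)}=\mathfrak{c}_0=\mathfrak{h}$, it is enough to pin it down by pairing with $\mathfrak{h}$. For $H\in\mathfrak{h}$, invariance gives $K(H,[X_\alpha,X_{-\alpha}])=K([H,X_\alpha],X_{-\alpha})=\alpha(H)\,K(X_\alpha,X_{-\alpha})$, while by the definition of the dual root $K(H,H_\alpha)=\alpha(H)$; hence $[X_\alpha,X_{-\alpha}]-K(X_\alpha,X_{-\alpha})H_\alpha$ lies in the radical of $K|_\mathfrak{h}$, which is zero. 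The normalisation follows because the orthogonality relations force the pairing $\mathfrak{c}_\alpha\times\mathfrak{c}_{-\alpha}\to\mathbb{C}$ to be perfect — otherwise a nonzero element of $\mathfrak{c}_\alpha$ would lie in the radical of $K$ on all of $\mathfrak{c}$, against semisimplicity — so $X_{-\alpha}$ may be rescaled to give $K(X_\alpha,X_{-\alpha})=1$. Finally, $\alpha\mapsto H_\alpha$ is the restriction of the isomorphism $\mathfrak{h}^*\simeq\mathfrak{h}$ induced by $K|_\mathfrak{h}$, so the fact (established in the preceding Proposition) that $R$ spans $\mathfrak{h}^*$ immediately gives that $\{H_\alpha\}_{\alpha\in R}$ spans $\mathfrak{h}$.

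For item (2), I would argue by contradiction. Suppose $K^*(\alpha,\alpha)=\alpha(H_\alpha)=0$; normalising $K(X_\alpha,X_{-\alpha})=1$, by (1) the span $\mathfrak{s}=\mathbb{C}X_\alpha\oplus\mathbb{C}H_\alpha\oplus\mathbb{C}X_{-\alpha}$ is a subalgebra with $[X_\alpha,X_{-\alpha}]=H_\alpha$ and $[H_\alpha,X_{\pm\alpha}]=\pm\alpha(H_\alpha)X_{\pm\alpha}=0$, that is, a nilpotent (Heisenberg) algebra with $H_\alpha$ central. By Lie's theorem $\ad(\mathfrak{s})$ is simultaneously upper triangularisable on $\mathfrak{c}$, so $\ad_{H_\alpha}=[\ad_{X_\alpha},\ad_{X_{-\alpha}}]$ is strictly upper triangular, hence nilpotent; but $H_\alpha\in\mathfrak{h}$ makes $\ad_{H_\alpha}$ semisimple, and a semisimple nilpotent operator vanishes, so $H_\alpha\in\mathfrak{z}(\mathfrak{c})=0$ — contradicting $H_\alpha\neq0$, which holds since $\alpha\neq0$ and $\alpha\mapsto H_\alpha$ is injective. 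Therefore $K^*(\alpha,\alpha)\neq0$.

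For item (3), the real input is the finite-dimensional representation theory of $\mathfrak{sl}_2(\mathbb{C})$. Using (2), rescale to a standard triple $E_\alpha\in\mathfrak{c}_\alpha$, $F_\alpha\in\mathfrak{c}_{-\alpha}$, $H'_\alpha=\frac{2}{\alpha(H_\alpha)}H_\alpha$ with $[E_\alpha,F_\alpha]=H'_\alpha$, $[H'_\alpha,E_\alpha]=2E_\alpha$, $[H'_\alpha,F_\alpha]=-2F_\alpha$, spanning a copy $\mathfrak{s}_\alpha\cong\mathfrak{sl}_2(\mathbb{C})$. Set $W=\ker(\alpha|_\mathfrak{h})\oplus\mathbb{C}H'_\alpha\oplus\bigoplus_{c:\,c\alpha\in R}\mathfrak{c}_{c\alpha}$; the grading rule makes $W$ an $\mathfrak{s}_\alpha$-submodule of $\mathfrak{c}$, with $\ad_{H'_\alpha}$ acting by $2c$ on $\mathfrak{c}_{c\alpha}$ and by $0$ on $\ker\alpha\oplus\mathbb{C}H'_\alpha$. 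Each vector of $\ker\alpha$ is annihilated by $E_\alpha,F_\alpha,H'_\alpha$ and hence spans a trivial submodule, whereas a weight-zero vector $H+cH'_\alpha$ with $H\in\ker\alpha$ is killed by $E_\alpha$ only when $c=0$; so the trivial isotypic component of $W$ has dimension exactly $\dim\ker\alpha=\dim\mathfrak{h}-1$. Since $R$ is reduced (as noted in the discussion preceding this Proposition; alternatively the same weight count excludes $2\alpha$ and $\frac12\alpha$ from $R$), $W$ carries only the weights $-2,0,2$, so every nontrivial irreducible summand is the $3$-dimensional adjoint module and contributes exactly one dimension to each of the $\ad_{H'_\alpha}$-eigenspaces $W_0=\mathfrak{h}$ (eigenvalue $0$) and $W_2=\mathfrak{c}_\alpha$ (eigenvalue $2$). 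Comparing dimensions in $W_0$ forces precisely one such summand, whence $\dim\mathfrak{c}_\alpha=\dim W_2=1$.

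Items (1) and (2) amount to bookkeeping with the invariance of $K$ plus the ``semisimple $+$ nilpotent $=0$'' principle, so I expect the main obstacle to be item (3): it genuinely uses the integrality and string structure of $\mathfrak{sl}_2$-weights, and — if one does not wish to assume reducedness of $R$ a priori — the delicate point is to run the weight analysis so that it simultaneously excludes $2\alpha,\frac12\alpha\in R$, which is the classical subtlety in this argument (cf. \cite{Kir}).
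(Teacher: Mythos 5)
Your proof is correct, and it is worth noting that the paper itself does not actually prove this proposition: its proof environment is suppressed, and even the suppressed fragment only carries out the invariance computation $K([X_\alpha,X_{-\alpha}],H)=\alpha(H)K(X_\alpha,X_{-\alpha})=K(K(X_\alpha,X_{-\alpha})H_\alpha,H)$ for the first claim of item (1) — which is exactly your computation — before deferring items (2) and (3) to the literature (Kirillov). So for item (1) you and the paper coincide, with you adding the (correct) justifications the paper leaves implicit: that the difference lies in the radical of the nondegenerate form $K|_{\mathfrak h}$, that the pairing $\mathfrak c_\alpha\times\mathfrak c_{-\alpha}\to\mathbb C$ must be perfect by the orthogonality relations plus semisimplicity, and that spanning of $\mathfrak h$ by the $H_\alpha$ follows from $R$ spanning $\mathfrak h^*$ via the isomorphism induced by $K|_{\mathfrak h}$. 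For items (2) and (3) you supply the standard classical arguments (the Heisenberg/Lie's-theorem contradiction for $\alpha(H_\alpha)\neq 0$, and the $\mathfrak{sl}_2(\mathbb C)$ weight-multiplicity count on the submodule $W$), which the paper simply cites. Both of these are sound as written; the one genuinely delicate point, which you correctly flag, is that the clean weight count for item (3) uses that only $\pm\alpha$ among the multiples of $\alpha$ are roots. Within the logic of this appendix that is legitimate, since the paper asserts just before the proposition that $R$ is a reduced root system; if one wanted to avoid that appeal, one would run the slightly longer version of the same $\mathfrak{sl}_2$ argument (first excluding even weights other than $0,\pm2$, hence $2\alpha\notin R$, hence $\tfrac12\alpha\notin R$, hence no odd weights), but your proof as it stands is complete relative to what the paper has already established.
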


Now we will explain in which sense the information contained in the Cartan subalgebra is redundant and how we can gather all of the relevant information in some conic sector $R^+\subset \mathfrak{h}^*$.

\begin{definition}
A subset $R^+ \subset R$ is said to be a \ev{set of  positive roots} iff for $\alpha\in R^+$ there exists a vector $v\in E$ such that $K^*(\alpha,v)>0$ and $K^*(\beta,v)<0$ for $\beta\in R\backslash R^+$.

The \ev{Weyl group} associated with the root system $R$ is the group generated by reflections through the hyperplanes orthogonal to the roots. As such it is a subgroup of the isometry group of $R$.

We define the \ev{positive Weyl chamber} associated to $R^+$ to be the closed subset
\begin{equation}
\mathcal{C}^+=\{v\in E\ |\ K^*(\alpha,v)\geq 0,\ \alpha\in R^+\}.
\end{equation}

The Weyl group $W$ acts on the positive Weyl chamber and we call any image under any element $w\in W$ a \ev{Weyl chamber}.
\end{definition}

\begin{remark}

Referring to \cite{Kir} for more details, we claim that one can consider the subspace $\mathfrak{h}_\R:=\text{Span}_\R\{iH_\alpha\} \subset \mathfrak{h}$, the real span of the dual roots multiplied by $i$, to obtain what is called a \ev{real form} for $\mathfrak{h}$, that is to say $\mathfrak{h}=\mathfrak{h}_\R\otimes \mathbbm{C}$ and $K$ is positive definite on it. 

Considering the root vectors $X_\alpha, X_{-\alpha}$ and their commutator $H_\alpha=[X_\alpha,X_{-\alpha}]$ for the complex (or complexified) Lie algebra $\mathfrak{c}$, the real span of the combinations
\begin{equation}
(X_{\alpha}-X_{-\alpha}),\ iH_\alpha,\ i(X_{\alpha}+X_{-\alpha})
\end{equation}
defines a real form for $\mathfrak{g}$, called \ev{compact real form}. It is a compact semisimple Lie algebra, whose maximal abelian subgroup is given by the real span of the dual roots $iH_\alpha$.

If we take a compact real semisimple Lie algebra $\mathfrak{g}$ with a maximal abelian subalgebra $\mathfrak{t}$, complexifying we have that $\mathfrak{h}=\mathfrak{t}_\mathbbm{C}$ is a Cartan subalgebra for $\mathfrak{g}_\mathbbm{C}$. Roughly speaking, the subalgebra $\mathfrak{t}$ is almost a Cartan subalgebra of $\mathfrak{g}$: it is maximally abelian and all his elements are ad-skewsymmetric. On the other hand the subalgebra $i\mathfrak{t}\subset \mathfrak{g}_\mathbb{C}$ is isomorphic to $\mathfrak{h}_\R$ and the restriction $K(\cdot,\cdot)|_{i\mathfrak{t}}$ is positive definite.

As a matter of fact, the eigenvalues of $\ad_{H}$ with $H\in\mathfrak{t}$ (i.e. the dual roots of $\mathfrak{g}_\mathbb{C}$) are purely imaginary due to the skew-symmetry and reality of $\ad_H$. If $\alpha$ is a root of $\mathfrak{g}_\mathbb{C}$, it must be real on $\mathfrak{h}_\R$ and one concludes $\mathfrak{h}_\R\simeq i\mathfrak{t}$.

Therefore, for a Lie algebra $\mathfrak{g}$ which is already compact (not necessarily the compact real form of a complex semisimple Lie algebra) we can consider the Cartan subalgebra given by $i\mathfrak{t}$ and the positive Weyl chamber will be contained in it: $\mathcal{C}^+\subset i\mathfrak{t}$.
\end{remark}

Turning finally again to compact Lie algebras, the result   that will matter the most for our purposes reads:
\begin{proposition} \cite[Lemma 3 and Proposition 3, Chapter 5]{Kir}\\
Every co-adjoint orbit of a compact Lie group $G$ on its dual Lie algebra $\mathfrak{g}^*$ intersects the Cartan subalgebra a number of times equal to the cardinality of the Weyl group of the Root system associated with $\mathfrak{g}^*_\mathbb{C}$. In particular the set of co-adjoint orbits for this action is in one to one correspondence with the positive Weyl chamber $\mathcal{C}^+\subset i\mathfrak{t}^*$ . 

Moreover, given any point $\xi_\bullet\in\mathfrak{t}^*$, its stabiliser is isomorphic to a subgroup containing the maximal torus $T$ of $G$ and there are finitely many orbits, each of which is associated with a subgroup $H$ s.t. $T\subseteq H \subset G$ .
\end{proposition}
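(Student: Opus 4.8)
The plan is to reduce everything to the classical structure theory of compact connected Lie groups. Since $G$ is compact, fix an $\Ad$-invariant inner product on $\mathfrak{g}$ (for instance minus the Killing form on $[\mathfrak{g},\mathfrak{g}]$ together with any inner product on $\mathfrak{z}(\mathfrak{g})$, cf. Proposition \ref{structcompact}); it identifies $\mathfrak{g}^*\simeq\mathfrak{g}$ equivariantly, intertwines the co-adjoint and adjoint actions, and carries $i\mathfrak{t}^*$ to $i\mathfrak{t}$, so it is enough to argue for the adjoint action of $G$ on $\mathfrak{g}$ with $\mathcal{C}^+\subset i\mathfrak{t}$. That every adjoint orbit meets $\mathfrak{t}$ then follows from the maximal torus theorem: for $X\in\mathfrak{g}$ the closure of $\{\exp tX\}_{t\in\R}$ is a torus, hence lies in a maximal torus, and all maximal tori being conjugate, $X$ is $\Ad$-conjugate into $\mathfrak{t}=\mathrm{Lie}(T)$; this gives surjectivity of $\mathcal{C}^+\to\{\text{orbits}\}$.

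The core step is to prove that $X,Y\in\mathfrak{t}$ are $\Ad(G)$-conjugate if and only if they lie in one orbit of the Weyl group $W=N_G(T)/T$ acting on $\mathfrak{t}$. The ``if'' direction is immediate, as $W$ acts through conjugation by $N_G(T)$. For ``only if'', write $Y=\Ad_g X$; since $\mathfrak{t}$ is abelian, $\Ad_g\mathfrak{t}$ centralises $Y$, so both $\mathfrak{t}$ and $\Ad_g\mathfrak{t}$ are maximal abelian subalgebras of $\ad$-semisimple elements of $\mathfrak{z}_\mathfrak{g}(Y)$, i.e. maximal tori of the compact connected subgroup $Z_G(Y)^0$. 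Conjugacy of maximal tori inside $Z_G(Y)^0$ then yields $h\in Z_G(Y)^0$ with $\Ad_h\Ad_g\mathfrak{t}=\mathfrak{t}$, so $hg\in N_G(T)$; since $h$ fixes $Y$, the class $w=hgT\in W$ satisfies $w\cdot X=\Ad_{hg}X=\Ad_h Y=Y$.

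Granting this, I would finish using two standard ingredients. First, the elementary theory of finite reflection groups: $W$ acts on the Euclidean space $i\mathfrak{t}$ as the reflection group of the root system $R$ of $\mathfrak{g}_\mathbb{C}$ (trivially on the part coming from $\mathfrak{z}(\mathfrak{g})$), and $\mathcal{C}^+$ is a strict fundamental domain, meeting each $W$-orbit in exactly one point; combined with the previous two steps this gives the bijection $\{\text{co-adjoint orbits}\}\leftrightarrow\mathcal{C}^+$, and for a regular element (trivial $W$-stabiliser) the orbit meets $\mathfrak{t}$ in exactly $|W|$ points --- the asserted count, which for non-regular elements must be read as $|W|/|W_X|$ with $W_X$ the parabolic isotropy subgroup. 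Second, for the stabiliser statement: $H_{\xi_\bullet}=Z_G(\xi_\bullet)$ contains $T$ because $T$ is abelian, has Lie algebra $\mathfrak{t}\oplus\bigoplus_{\alpha(\xi_\bullet)=0}\mathfrak{g}_\alpha$, and is connected since centralisers of elements of $\mathfrak{t}$ in a compact connected group are connected; its conjugacy class depends only on the set of roots vanishing at $\xi_\bullet$, equivalently on the relatively open face of $\mathcal{C}^+$ containing $\xi_\bullet$, and a simplicial cone has finitely many faces, whence finitely many orbit types, each labelled by a closed subgroup $H$ with $T\subseteq H\subset G$.

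I expect the main obstacle to be the ``only if'' half of the $G$- versus $W$-conjugacy equivalence; everything else is either routine identification (via compactness and the face combinatorics of $\mathcal{C}^+$) or directly quotable from the theory of finite reflection groups and from the maximal torus theorem, to which that step itself reduces once it is applied inside $Z_G(Y)^0$ (see \cite{Kir,Hsiang}).
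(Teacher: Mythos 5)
Your argument is correct. Note, however, that the paper does not prove this proposition at all: it is quoted directly from \cite[Lemma 3 and Proposition 3, Chapter 5]{Kir}, so there is no in-text proof to compare against; what you have written is essentially the standard derivation that Kirillov's treatment also follows. The three ingredients you isolate are the right ones: (i) the maximal torus theorem, giving that every (co-)adjoint orbit meets $\mathfrak{t}$; (ii) the equivalence of $\Ad(G)$-conjugacy and $W$-conjugacy on $\mathfrak{t}$, whose nontrivial half you correctly reduce to conjugacy of maximal tori inside the connected centraliser $Z_G(Y)^0$; and (iii) the fact that the closed chamber $\mathcal{C}^+$ is a strict fundamental domain for the reflection group $W$. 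Your observation that the intersection count $|\orb_{\xi_\bullet}\cap\mathfrak{t}|$ equals $|W|$ only for regular $\xi_\bullet$, and must be read as $|W|/|W_{\xi_\bullet}|$ in general, is a genuine and correct sharpening: as literally stated the first sentence of the proposition holds only on the open chamber, and your parabolic-isotropy correction is precisely what makes it consistent with the bijection onto the \emph{closed} chamber, which contains non-regular points. The stabiliser discussion is likewise sound --- connectedness because $Z_G(\xi_\bullet)$ is the centraliser of the torus $\overline{\exp(\R\xi_\bullet)}$, finiteness of orbit types from the finiteness of the set of faces of $\mathcal{C}^+$ --- with the understanding, which you adopt implicitly and which is the only sensible reading, that ``finitely many orbits'' in the statement means finitely many orbit \emph{types}.
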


This result will allow us to consider all representatives $\xi_\bullet$ of orbits $\orb_{\xi_\bullet}$ in the positive Weyl chamber $\xi_\bullet\in\mathcal{C}^+\subset i\mathfrak{t}^*\simeq i\mathfrak{t}$. Every time an expression on $\orb_{\xi_\bullet}$ is equivariant with respect to the action of $G$, we will be able to compute it on the reference point $\xi_\bullet$. This is what happens for the KKS symplectic form, the SLD and the Fisher Tensor.

\paragraph{Example} Let us work out the main example we will use throughout the main text to fix the ideas. Consider the Lie group $SU(n)$ and its Lie algebra $\mathfrak{su}(n)$ of anti-Hermitian, traceless matrices. Inside $\mathfrak{su}(n)$ we have $\mathfrak{t}$, the toral subalgebra of diagonal, anti-Hermitian, traceless matrices, whose entries are clearly purely imaginary. The complexification of $\mathfrak{su}(n)$ admits a root space decomposition with respect to the Cartan subalgebra $\mathfrak{t}\otimes\mathbb{C}$.

If we pick an element $D_\lambda=\mathrm{diag}\{\lambda_1,\dots,\lambda_n\}\in\mathfrak{t}\otimes\mathbb{C}$, with $\lambda_i\in\mathbb{C}$ and $X_{\alpha_{k,j}}$ is the matrix with a $1$ in the position $(k,j)$ and zero elsewhere we have that the roots for this semisimple Lie algebra are given by $\alpha_{k,j}(D_\lambda)=\lambda_k-\lambda_j$:
\[
[D_\lambda,X_{k,j}]=(\lambda_k-\lambda_j)X_{k,j}.
\]

It is easy to show that the root vectors $H_{\alpha_{k,j}}=[X_{\alpha_{k,j}},X_{-\alpha_{k,j}}]$ are given by diagonal matrices with $+1$ in the $(k,k)$ entry and $-1$ in the $(j,j)$ entry. The real span of $iH_{\alpha_{k,j}}$ is then isomorphic to the subalgebra $\mathfrak{t}$ of traceless, diagonal anti-hermitian matrices, as desired.  

The Weyl group is isomorphic to $S^n$ and permutes the diagonal entries and the roots, as well as the entries of the Torus $U(1)^{n-1}$ maximal in $SU(n)$. A positive Weyl chamber will be given by a choice of ordering and positivity of the roots (which come in pairs of opposite sign). The span of these positive roots is again isomorphic with $i\mathfrak{t}\subset \mathfrak{t}_\mathbb{C}$. 

Notice, finally, that the orbits of $U(n)$ and those of $SU(n)$ are essentially the same. The extra $U(1)$ central subgroup in $U(n)$ gets killed, being trivially a subgroup of any stabiliser (cf. \cite{Kir} and \cite{Hsiang}).

\newpage
\addcontentsline{toc}{section}{\bibname}

\end{document}